\newcolumntype{P}[1]{>{\centering\arraybackslash}p{#1}}
\newcolumntype{M}[1]{>{\centering\arraybackslash}m{#1}}
\newcolumntype{L}[1]{>{\raggedright\let\newline\\\arraybackslash\hspace{0pt}}m{#1}}
\newcolumntype{C}[1]{>{\centering\let\newline\\\arraybackslash\hspace{0pt}}m{#1}}
\newcolumntype{R}[1]{>{\raggedleft\let\newline\\\arraybackslash\hspace{0pt}}m{#1}}
\newtheorem{thm}{Theorem}
\newtheorem{prop}[thm]{Proposition}
\xpatchcmd{\proof}{\itshape}{\normalfont\proofnamefont}{}{}
\newcommand{\proofnamefont}{\bfseries}
\newcommand{\blind}{1}
\definecolor{ch}{RGB}{30,136,229}
\definecolor{green}{RGB}{34,139,34}
\newcommand\extrafootertext[1]{%
    \bgroup
    \renewcommand\thefootnote{\fnsymbol{footnote}}%
    \renewcommand\thempfootnote{\fnsymbol{mpfootnote}}%
    \footnotetext[0]{#1}%
    \egroup
}
\newcommand\blfootnote[1]{%
  \begingroup
  \renewcommand\thefootnote{}\footnote{#1}%
  \addtocounter{footnote}{-1}%
  \endgroup
}
\begin{document}

\def\spacingset#1{\renewcommand{\baselinestretch}%
{#1}\small\normalsize} \spacingset{1}

%%%%%%%%%%%%%%%%%%%%%%%%%%%%%%%%%%%%%%%%%%%%%%%%%%%%%%%%%%%%%%%%%%%%%%%%%%%%%%

\if1\blind
{
  \title{\bf GPDFlow: Generative Multivariate Threshold Exceedance Modeling via Normalizing Flows }}
  \author{Chenglei Hu$^1$* \blfootnote{Corresponding author. Email address: c.hu.2@research.gla.ac.uk}, Daniela Castro-Camilo$^1$}
  
  \footnotetext[1]{
\baselineskip=10pt School of Mathematics and Statistics, University of Glasgow, UK.}

  \maketitle
 \fi

\if0\blind
{
  \bigskip
  \bigskip
  \bigskip
  \begin{center}
    {\LARGE\bf Title}
\end{center}
  \medskip
} \fi

\bigskip
\begin{abstract}
% Multivariate Generalized Pareto distribution (mGPD) is widely used in extreme threshold exceedance probability modeling in many environmental and financial applications for risk management.
% mGPD has infinite parametrization due to its dependence function and only very few parametrizations with closed forms have been explored. 
% This paper explores a flow-based mGPD called GPDFlow, which approximates mGPD's dependence function by normalizing flows while keeping its marginal parameters for tail heaviness inference.
% GPDFlow avoids explicitly assuming a parametric form of the dependence structure of mGPD, making it more flexible and performing better than traditional parametric mGPD.   
% We discuss patterns in the GPDFlow and the threshold selection method based on them.
% We demonstrate the GPDFlow under both well-specified and misspecified scenarios and apply it in the estimation of the Conditional Value at Risk (CoVaR) of the log return of five large US banks.

The multivariate generalized Pareto distribution (mGPD) is a common method for modeling extreme threshold exceedance probabilities in environmental and financial risk management. 
Despite its broad applicability, mGPD faces challenges due to the infinite possible parametrizations of its dependence function, with only a few {parametric models} available in practice. 
To address this limitation, we introduce GPDFlow, an innovative mGPD model that leverages normalizing flows to flexibly represent the dependence structure. Unlike traditional parametric mGPD approaches, GPDFlow does not impose explicit parametric assumptions on dependence, resulting in greater flexibility and enhanced performance. 
Additionally, GPDFlow allows direct inference of marginal parameters, providing insights into {marginal} tail {behavior}. 
We derive tail dependence coefficients for GPDFlow, including a bivariate formulation, a $d$-dimensional extension, and an alternative measure for partial exceedance dependence.
A general relationship between the bivariate tail dependence coefficient and the generative samples from normalizing flows is discussed.
Through simulations and a practical application analyzing the risk among five major US banks, we demonstrate that GPDFlow significantly improves modeling accuracy and flexibility compared to traditional parametric methods.
\end{abstract}

\noindent%
{\it Keywords:}  deep generative model, multivariate generalized Pareto distribution, normalizing flows, threshold exceedance modeling, extreme value theory
\vfill

\newpage
\spacingset{1.45} % DON'T change the spacing!
\section{Introduction}
\label{sec:intro}
In multivariate extremes, accurately estimating joint exceedance probabilities is critical for risk management.
Traditionally, this refers to the probability that all components exceed their respective thresholds simultaneously.
However, recent studies increasingly focused on partial joint exceedance probabilities, where only a subset of the components surpasses thresholds \citep{heffernan2004conditional, winter2016modelling, li2024wee}.

Estimating these probabilities falls within the realm of multivariate threshold exceedance modeling, which is typically approached using the multivariate generalized Pareto distribution (mGPD). 
The mGPD naturally extends the univariate generalized Pareto distribution (GPD), a key concept in extreme value theory, as established by \cite{balkema1974residual}. 
Their seminal work led to the Pickands-Balkema-de Haan theorem, which demonstrates that, under certain conditions, the distribution of exceedances above a high threshold converges to a GPD.
The class of distributions satisfying this theorem is linked to the class of distributions described by the Fisher–Tippett–Gnedenko theorem, thereby establishing a connection between the univariate GPD and the univariate generalized extreme value distribution (GEVD). 
This relationship was later extended to the multivariate case by \cite{rootzen2006multivariate}, who also demonstrated that the mGPD is the only threshold-stable multivariate distribution.
While the univariate GPD has been widely applied in threshold exceedance models \citep{davison_models_1990, chavez2005generalized, castro2019spliced, he2022risk}, its multivariate counterpart has seen relatively less use, primarily due to the challenges in extending the univariate formulation to the multivariate case. 
This is mainly due to the absence of a unique parametrization, as the dependence structure in the multivariate GEVD allows for infinitely many possible parameterizations \citep{coles2001introduction}. 
To overcome this, \cite{rootzen2006multivariate, rootzen2018multivariate, rootzen2018multivariate_b} and \cite{kiriliouk2019peaks} proposed several parametrizations specifically designed to yield closed-form expressions that facilitate efficient computation.
However, these parametrizations often fail to fully capture the complexity of real-world data.

% The above probability estimation is part of the framework of multivariate threshold exceedance modeling, in which the multivariate generalized Pareto distribution (mGPD) is a possible solution.
% mGPD is a multivariate extension of the univariate generalized Pareto distribution, a cornerstone in the extreme value theory and widely used in environmental threshold exceedance modeling (\red{To DCC: I think it is a good place to add your past publications! })
% % Similar to the univariate GPD, mGPD could be derived from the multivariate extreme value distribution (mGEV).
% % A well-known property of mGEV is that though it has a parametric marginal distribution, its dependence function has infinite parametrization \citep{coles2001introduction}.
% % As a result,  mGPD inherits the infinite-parametrizations feature from the mGEV. 
% One of the challenges to implement a mGPD is that it lacks deterministic parametrazations.
% Several papers discuss the parametrizations of mGPD and use them in stock return and landslides estimation \citep{rootzen2006multivariate, kiriliouk2019peaks}. 
% However, these parametrizations are constructed under computational considerations to get explicit density expressions, and the density might not fit the data well.

To address this, we propose GPDFlow, an innovative flow-based mGPD model that employs normalizing flows to flexibly represent the dependence structure. 
Normalizing flows are powerful deep generative models that transform simple base distributions (such as Gaussian distributions) through a sequence of invertible mappings, enabling the approximation of highly complex distributions \citep{papamakarios2021normalizing}. 
By leveraging normalizing flows, GPDFlow avoids restrictive assumptions on tail dependencies, allowing the data to dictate dependence structures organically. 
Furthermore, embedding normalizing flows within the mGPD framework makes GPDFlow statistically rigorous and maintains the desirable theoretical properties of traditional mGPD models.

The GPDFlow model offers two key advantages.
First, it simultaneously captures marginal distributions and dependence structures within a unified modeling framework, significantly improving existing two-step extremal dependence modeling approaches that use generative approaches \citep{boulaguiem2022modeling, mcdonald2022comet}. 
The marginal scale and shape parameters are explicitly estimated outside the flow transformations, providing direct insight into tail heaviness and overcoming known difficulties of affine triangular-based flow models in modeling heavy tails \citep{jaini2020tails}.
Second, GPDFlow serves as a proper statistical distribution with an explicit density function, facilitating straightforward maximum likelihood estimation and efficient sampling. 
Consequently, statistical inference—including extrapolation via Monte Carlo methods—is both reliable and computationally practical.

The rest of the paper is organized as follows.
Section \ref{sec: method} provides the theoretical foundations of mGPD, normalizing flows, and our proposed GPDFlow model. 
% In Section \ref{sec: Inference}, we detail the inference procedure, demonstrate \ch{the characteristics of the bivariate tail dependence coefficient of a GPDFlow and how they associate with the general behavior of the mGPD generator. We also discuss the threshold selection strategies based on these characteristics.} 
% \dcc{what about: }
In Section \ref{sec: Inference}, we present the inference procedure and derive an explicit formulation for the bivariate tail dependence coefficient of a GPDFlow. We further extend this analysis to the $d$-dimensional case ($d > 2$) and introduce an alternative coefficient designed to quantify the dependence of partial exceedances. Besides, we explore how the GPDFlow tail dependence coefficient relates to the broader properties of samples generated by normalizing flows embedded within GPDFlow for $d=2$. 
% \dcc{my point here is that I think it's important to mention not only the bivariate coeff, but also the two d-dimensional ones}
Simulation studies in Section \ref{sec: simulation} assess GPDFlow’s performance under both correctly specified and misspecified conditions. 
Section \ref{sec: application} illustrates the practical utility of GPDFlow in financial risk management, specifically through calculating the Conditional Value at Risk (CoVaR) for negative returns of five major U.S. banks. 
Lastly, Section \ref{sec: Discussion} acknowledges existing limitations of GPDFlow and outlines practical strategies to address these challenges.

\textbf{Notation}: Throughout the paper, $\max(\cdot)$ and $\min(\cdot)$ take one vector as input, returning its largest or smallest element, respectively, while 
$\wedge$ operates on two vectors, returning their elementwise minima.
% \dcc{below you use the symbol $\wedge$, shouldn't we use it (and the one for the max) here as well?} \ch{CH:Not really, $\wedge$ will return the elementwise minimum (it's a vector! I have organized sentences to highlight this point)}
All operations between two vectors $\boldsymbol{x}=(x_1,\cdots,x_d)$ and $\boldsymbol{y}=(y_1,\cdots,y_d)$ are performed elementwise. 
For example, $\boldsymbol{x}\boldsymbol{y}$ denotes the Hadamard product of $\boldsymbol{x}$ and $\boldsymbol{y}$, i.e. $\boldsymbol{xy}=(x_1y_1,\cdots,x_dy_d)$, and $\boldsymbol{x} \wedge \boldsymbol{y}$ represents the elementwise minima of $\boldsymbol{x}$ and $\boldsymbol{y}$, where $(\boldsymbol{x}\wedge \boldsymbol{y})_{j} = \min\{(x_j,y_j)\},\; j=1,\cdots,d $.
% \dcc{Chenglei, please check the symbol in red} \ch{CH: You're correct. Thank you!}
Similarly, logarithm, exponential, and indicator functions are applied elementwise when operating on vectors.

% \ch{CH: I remove all the transpose signs you added to vectors. I personally think it is a bit misleading because it makes the reader think a dot product or matrix product might be used soon, which is not the case. Since we only have elementwise operations on vectors, column vector notation or row vector notation doesn't make a difference. I prefer to use row vector notation as it is in the same format of the code. }

\section{Method}
\label{sec: method}
% \subsection{XGBoost}
% XGBoost \citep{chen2016xgboost} is a tree-based boosting model widely used in regression and classification tasks.
% In XGBoost, data set $y_i$ is modelled by a series of decision trees by 
% \

\subsection{Multivariate GPD}
Suppose $\boldsymbol{Y}$ is a $d$-dimensional vector with cumulative distribution function $F$ that is in the max-domain of attraction of a non-degenerated distribution $G$.
This is to say, there exists sequences $\boldsymbol{a}_n \in (0,\infty)^d $ and $\boldsymbol{b}_n \in \mathbb{R}^d$ such that 
\begin{align}
    \lim_{n \rightarrow \infty } n\{1-F(\boldsymbol{a}_n\boldsymbol{y} + \boldsymbol{b}_n)\} = -\log G(\boldsymbol{y}).
    \label{eq: MDA}
\end{align}
The multivariate generalized Pareto distribution (mGPD) $H(\boldsymbol{x})$ arises when considering the conditional probability of $\boldsymbol{Y}$ when at least one component of $\boldsymbol{Y}$ is extreme \citep{rootzen2006multivariate}. 
Formally,
\begin{equation}
    \begin{split}
     H(\boldsymbol{x})&= \lim_{n\to \infty} \mathrm{P}\{\boldsymbol{a}_n^{-1}(\boldsymbol{Y}-\boldsymbol{b}_n) \leq \boldsymbol{x} | \boldsymbol{Y} \nleqslant \boldsymbol{b}_n\}\\
&=\frac{1}{\log G(\boldsymbol{0})} \log\frac{G(\boldsymbol{x}\wedge\boldsymbol{0} )}{G(\boldsymbol{x})}   
    \end{split}    
    \label{eq: GPD}
\end{equation}
The distribution $G$ in \eqref{eq: MDA} and \eqref{eq: GPD} is the multivariate generalized extreme value distribution. 
As a direct result of the weak convergence in \eqref{eq: MDA}, which implies the convergence of both margin and copula, the marginal distributions $G_j(x_j)$, $j=1,\ldots,d,$ are univariate generalized extreme value distributions by Fisher–Tippett–Gnedenko theorem \citep[p.~6]{haan2006extreme}, with distribution function
\begin{align*}
    G_j\left(x_j\right)= \begin{cases}\exp \left\{-\left\{1+\gamma_j\left(x_j-\mu_j\right) / \alpha_j\right\}^{-1 / \gamma_j}\right\} & \text { if } \gamma_j \neq 0, \\ \exp \left\{-\exp \left\{-\left(x_j-\mu_j\right) / \alpha_j\right\}\right\} & \text { if } \gamma_j=0,\end{cases}
\end{align*}
where $\alpha_j \in (0,\infty), \mu_j,\gamma_j \in \mathbb{R}$ and support $\{x_j \in \mathbb{R}: \alpha_j + \gamma_j(x_j - \mu_j)>0\} $.
Appropriate choices of $\boldsymbol{a}_n$ and $\boldsymbol{b}_n$ always ensure 0 is in the support of $G_j$.
The convergence of the corresponding copula implies 
\begin{align*}
    \lim_{n \rightarrow \infty } n\left\{1 - C_F\left(1-\frac{x_1}{n},\cdots, 1-\frac{x_d}{n}\right)\right\} = -\log C_G \left(\exp\{-x_1\},\cdots, \exp\{-x_d\}\right) := \ell (\boldsymbol{x})
\end{align*}
where $C_F$ and $C_G$ are the copulas of $F$ and $G$, respectively, and $\ell(\boldsymbol{x})$ is called the stable tail dependence function (stdf) of $C_G$, which describes the dependence structure of $G$ \citep{segers2012max}.
The stdf $\ell(\boldsymbol{x})$ does not have finite representations.
In fact,  any function $\ell(\boldsymbol{x}): [0,\infty)^d \rightarrow [0,\infty)$ is a valid stdf if it can be written as 
\begin{align}
    \ell(\boldsymbol{x}) = \mathbb{E}[\max(\boldsymbol{x}\boldsymbol{V})],
    \label{eq: D_norm}
\end{align}
where $\boldsymbol{V}$ is a $d$-dimensional random variable on $[0,\infty)^d$ satisfying $\mathbb{E}(V_j)=1, \; j=1,\cdots,d$ \citep{rootzen2018multivariate}. 
The expression in \eqref{eq: D_norm} defines $\ell(\boldsymbol{x})$ as a norm, call D-Norm, and $\boldsymbol{V}$ is the generator of $\ell(\boldsymbol{x})$ \citep{FALK201785}.

Combining the margins and the stdf, we can express $G(\boldsymbol{x})$ as 
\begin{align*}
    G(\boldsymbol{x}) = \exp\{-\ell\{-\log G_1(x_1),\cdots,-\log G_d(x_d)\} \}.
\end{align*}
Consequently, $H(\boldsymbol{x})$ is determined by the
marginal parameters $\boldsymbol{\alpha} = (\alpha_1,\cdots,\alpha_d), \boldsymbol{\mu} = (\mu_1,\cdots,\mu_d)$, $\boldsymbol{\gamma}=(\gamma_1,\cdots,\gamma_d)$ and the stdf $\ell(\boldsymbol{x})$. 
As noted by \cite{rootzen2018multivariate}, the parametrization of $H$ in terms of $(\boldsymbol{\alpha}, \boldsymbol{\mu},\boldsymbol{\gamma},\ell)$ is not convenient since $\boldsymbol{\alpha}$ and $\boldsymbol{\mu}$ are not identifiable from $H$ due to the max-stable property of $G$. 
This could be addressed by reparameterizing $H$ by $(\boldsymbol{\sigma},\boldsymbol{\gamma},\ell)$, where $\boldsymbol{\sigma} = \boldsymbol{\alpha} - \boldsymbol{\gamma}\boldsymbol{\mu}$
can be seen as the marginal scale parameter of $H(\boldsymbol{x})$.
Note that ${\sigma}_j > {0}$ is ensured by $G_j(0)>0$. 

If a random vector $\boldsymbol{X}$ follows a mGPD, then the marginal distributions 
\begin{align}
    H_j(x_j) = \frac{1}{\log G(\boldsymbol{0})} \log\frac{G((0,\cdots,0,x_j \wedge0,0,\cdots,0))}{G_j(x_j)}, \quad \sigma_j+\gamma_jx_j >0, \;j=1,\cdots,d,
    \label{eq: marginal_GPD}
\end{align}
depend on the marginal parameters $\sigma_j$, $\gamma_j$ and stdf $\ell$.
But the conditional distribution of $X_j>x_j|X_j>0$ relies only on  $\sigma_j$ and $\gamma_j$:
\begin{align}
  \frac{1- H_j(x_j)}{1-H_j(0)} =\frac{\log G_j(x_j)}{\log G_j(0)}= \left(1+\frac{
  \gamma_jx_j
  }{\sigma_j}\right)^{-1/\gamma_j}, \quad x_j>0.
  \label{eq: marginal_conditional_mgpd}
\end{align}
In other words, for an mGPD random vector, $\boldsymbol{X}$, the conditional distribution of $X_j>x_j|X_j>0$ is a univariate GPD.
Consequently, the parameters $\sigma_j$ and $\gamma_j$ retain the same interpretations as the scale and shape parameters of a univariate GPD.
What's more, we can use the following transformation 
\begin{align}
    \boldsymbol{Z} =g_{\text{std}}(\boldsymbol{X};\boldsymbol{\sigma},\boldsymbol{\gamma})= \mathbbm{1}\{\boldsymbol{\gamma} \neq \boldsymbol{0}\} \frac{1}{\boldsymbol{\gamma}}\log \left(\frac{\boldsymbol{\gamma}\boldsymbol{X}}{\boldsymbol{\sigma}}+1 \right) + \mathbbm{1}\{\boldsymbol{\gamma} = \boldsymbol{0}\}\frac{\boldsymbol{X}}{\boldsymbol{\sigma}}
    \label{eq: transformation}
\end{align}
to standardize the margins so that $\boldsymbol{\sigma} =\boldsymbol{1} $,  $\boldsymbol{\gamma}= \boldsymbol{0}$, and $\mathbb{P}(Z_j>z_j|Z_j>0) = \exp(-z_j)$ for $z_j>0$. We call this form the \emph{standardized mGPD} and denote its cdf and density as $H(\boldsymbol{z})$ and $h(\boldsymbol{z})$, respectively.

Although $H(\boldsymbol{z})$ can be expressed as a function of $\boldsymbol{V}$ through \eqref{eq: GPD}, deriving $h(\boldsymbol{z})$ is less straightforward. 
Alternatively, $H(\boldsymbol{z})$ and $h(\boldsymbol{z})$ can be derived using a stochastic representation of the mGPD called the T representation \citep{rootzen2018multivariate}, given by
\begin{align}
    \boldsymbol{Z} = E + \boldsymbol{T} - \max({\boldsymbol{T})},
    \label{eq: stocastic_representation}
\end{align}
where $E$ is a unit exponential random variable, and $\boldsymbol{T}$ is a $d$-dimensional random variable independent to $E$ and satisfying the weak conditions $\mathrm{P}(T_j > -\infty) > 0$ and $\mathrm{P}(\max(\boldsymbol{T}) > -\infty) = 1$. 
$\boldsymbol{T}$ can be seen as a generator for the mGPD, which jointly influences the dependence and margins of $\boldsymbol{Z}$ with $E$.
The generator $\boldsymbol{V}$ in $\ell(\boldsymbol{x})$ is associated with $\boldsymbol{T}$ by $\boldsymbol{V} = \exp\{\boldsymbol{T}-\max(\boldsymbol{T})\}/\mathbb{E}(\exp\{\boldsymbol{T}-\max(\boldsymbol{T})\})$.
Using \eqref{eq: stocastic_representation}, the density function $h(\boldsymbol{z})$ is given by
\begin{align}
    h(\boldsymbol{z}) = \frac{\mathbbm{1}\{\max(\boldsymbol{z}>0)\}}{\exp\{\max{(\boldsymbol{z})}\}}\int_{-\infty}^{\infty}f_{\boldsymbol{T}}(\boldsymbol{z}+s) \mathrm{d} s,
    \label{eq: mGPD_Density}
\end{align}
where $f_{\boldsymbol{T}}$ is the density of $\boldsymbol{T}$.
Since both conditions of $\boldsymbol{T}$ are satisfied by most common distributions, its flexibility allows for various parametrizations of the mGPD.
For instance, \cite{kiriliouk2019peaks} studied the cases when $\boldsymbol{T}$ follows a multivariate Gaussian and several multivariate distributions with independent components (Gumbel, reverse Gumbel, reverse exponential, and log-gamma), in which closed-form expressions of $h(\boldsymbol{z})$ can be obtained.

There exist alternative representations of the mGPD based on point process representations of extreme episodes~\citep{rootzen2018multivariate,rootzen2018multivariate_b}.
For instance, the R representation is derived directly from the point process in Section 3 of~\cite{rootzen2018multivariate_b}, while the U representation transforms R into a standardized scale by \eqref{eq: transformation}.
Although both the R representation and U representation give rise to densities $h(\boldsymbol{z})$ slightly different from \eqref{eq: mGPD_Density}, they can be transformed into the T representation by adding constraints to the corresponding generators.
Additionally, Theorem 4.4 in \cite{rootzen2018multivariate} highlighted that for any standardized mGPD, there always exists a random vector $\boldsymbol{T}$ capable of generating it.
% \red{Besides, Theorem 4.4 in \cite{rootzen2018multivariate} stated the unique association between a standardized mGPD and $\boldsymbol{T}-\max{(\boldsymbol{T})}$.}\dcc{not sure what you tried to say here, also, there is no Theo 4.4 in the cited paper}
Therefore, without losing generality, we use the T representation here due to its simple density form and straightforward sampling.

\subsection{Normalizing Flows}
Normalizing flows are flow-based generative models that apply a sequence of changes of variables to convert a simple distribution (base distribution) to any well-behaved distribution (target distribution).
Starting with a $d$-dimensional random variable $\boldsymbol{U}$ with base density $f_{\boldsymbol{U}}(\boldsymbol{u})$ and a invertible and differentiable transformation $g$ (namely a diffeomorphism), the transformed random variable $\boldsymbol{Y} = g(\boldsymbol{U}) $ has density 
\begin{align}
    f_{\boldsymbol{Y}}(\boldsymbol{y}) = f_{\boldsymbol{U}}(\boldsymbol{u}) |\det J_g(\boldsymbol{u})|^{-1}, \quad \boldsymbol{u} = g^{-1}(\boldsymbol{y}),
    \label{eq: change_of_variable}
\end{align}
where $J_g(\boldsymbol{u})$ is the $d \times d$ Jacobian matrix of $g$, i.e.,
\begin{align*}
J_g(\mathbf{u}) =
\begin{bmatrix}
\frac{\partial g_1}{\partial u_1} & \cdots & \frac{\partial g_1}{\partial u_d} \\
\vdots & \ddots & \vdots \\
\frac{\partial g_d}{\partial u_1} & \cdots & \frac{\partial g_d}{\partial u_d}
\end{bmatrix}.
\end{align*}
In the generative model context, \eqref{eq: change_of_variable} represents the generative direction, where ``noise'' from the base distribution is transformed into the target distribution. 
This process also corresponds to the sampling mechanism in a flow-based model.
The reversed process of \eqref{eq: change_of_variable} is the normalizing process, which transforms a target sample into the base sample and provides information for updating $J_g$. 
Such diffeomorphism $g$ always exists for any well-behavior $f_{\boldsymbol{Y}}(\boldsymbol{y})$ and $f_{\boldsymbol{U}}(\boldsymbol{u})$  \citep{papamakarios2021normalizing}. 
To construct a $g$ that is expressive enough in the sense that it can be used to transform $f_{\boldsymbol{U}}$ to any $f_{\boldsymbol{Y}}$, we can first use a neural network to represent a diffeomorphism $g_i$, $i=1,\ldots,K$, where its expressive power is justified by the universal approximation theorem \citep{hornik1989multilayer}.
We can then compose all $g_i$'s into a single transformation $g = g_K\circ  \cdots  \circ g_1$ to further improve the expressivity of a single $g_i$.
% Firstly, By composing multiple differentiate and invertible $g_i$ into a complex transformation $g = g_K\circ  \cdots  \circ g_1$, $f_{\boldsymbol{Y}}(y)$ is expressive enough to represent any distribution under mild conditions.

A few architectures for $g_i$ have been studied to ensure its differentiability and invertibility \citep{dinh2017density,kingma2018glow, papamakarios2017masked}.
In this paper, we use the real-valued non-volume preserving (Real NVP; \citealt{dinh2017density}), which constructs complex probability distributions by applying a sequence of invertible, affine transformations with a triangular Jacobian structure to a base distribution, enabling efficient density evaluation and sampling.
% \dcc{the part in red is for Chenglei to check if it's correct} \ch{CH: very precise explanation!}
% is a flow structure based on an affine coupling.  
Specifically, the Real NVP model constructs a transformation $g$ by stacking multiple coupling layers together. In each coupling layer, an input vector $\boldsymbol{u}$ is partitioned into two parts $(\boldsymbol{u}_{p_1},\boldsymbol{u}_{p_2})$.
Then $\boldsymbol{u}_{p_1}$ remains unchanged and an affine transformation dependent on $\boldsymbol{u}_{p_1}$ is applied to $\boldsymbol{u}_{p_2}$. The mathematical formulation for such single coupling layer transformation $g_k: \boldsymbol{y} = g_k(\boldsymbol{u})$ is given by
\begin{equation}
\begin{aligned}
\boldsymbol{y} = \boldsymbol{b} \boldsymbol{u} + (\boldsymbol{1} - \boldsymbol{b})  \left\{ \boldsymbol{u}  \exp\left\{\zeta_k(\boldsymbol{b}  \boldsymbol{u})\right\} + \upsilon_k (\boldsymbol{b}  \boldsymbol{u}) \right\},
\end{aligned}
\label{eq: Real_NVP}
\end{equation}
where $\zeta_k,\upsilon_k: \mathbb{R}^d \rightarrow \mathbb{R}^d$ {are, respectively, the log-scale and translation functions, defined as} multilayer perceptions (MLPs). 
MLP is a feedforward neural network that extends regression by learning nonlinear transformations via weighted sums and activation functions, optimized through backpropagation~\citep{rumelhart1986learning}.
The vector $\boldsymbol{b}$ is a $d$-dimensional binary mask vector that masks components, i.e. determine which components are $\boldsymbol{u}_{p_1}$.
Let $M= \{j =1,\cdots,d:b_j=1\}$ denote the set of masked indices and $N = \{1,\cdots,d\}\backslash M$.
Then, \eqref{eq: Real_NVP} implies
% For components in the masked set $M= \{j =1,\cdots,d:b_j=1\}$, and unmasked set $N = \{1,\cdots,d\}\backslash M$, \eqref{eq: Real_NVP} implies
\begin{equation}
\begin{aligned}
\boldsymbol{y}_{j \in M} &= \boldsymbol{u}_{j \in M}, \\
\boldsymbol{y}_{j \in N} &= \boldsymbol{u}_{j \in N}  \exp\{\zeta_k(\boldsymbol{b}  \boldsymbol{u})\}_{j \in N} + \upsilon_k(\boldsymbol{b}  \boldsymbol{u})_{j \in N}.
\end{aligned}
\label{eq: Real_NVP_decompose}
\end{equation}
The components in the masked set $M$ are invariant under the transformation, while the components in the unmasked set $N$ are applied to an affine transformation with scale $\exp\{\zeta_k(\boldsymbol{b}  \boldsymbol{u})\}_{j \in N}$ and translation  $\upsilon_k(\boldsymbol{b}  \boldsymbol{u})_{j \in N}$. 
Due to the masking, $ \boldsymbol{b}  \boldsymbol{u}$ is a $d$-dimensional vector that has $u_j,\; j\in M$ at the $j$-th component and 0 elsewhere. 
Hence, the scale $\exp\{\zeta_k(\boldsymbol{b}  \boldsymbol{u})\}_{j \in N}$ and translation   $\upsilon_k(\boldsymbol{b}  \boldsymbol{u})_{j \in N}$ are functions of $\boldsymbol{u}_{j \in M}$, and by updating these functions, Real NVP will learn the dependence information between $\boldsymbol{u}_{j\in M}$ and $\boldsymbol{u}_{j\in N}$. 
To ensure comprehensive learning of the dependence, $\boldsymbol{b}$ is usually flipped in every coupling layer (i.e., we swap $M$ and $N$).

An appealing feature of \eqref{eq: Real_NVP_decompose} is that the determinant of the Jacobian of $g_k$'s can be easily calculated by rearranging the Jacobian into a lower triangle matrix:
\begin{equation*}
    \left| \det(g_k)  \right| = \left| \det\left (\begin{bmatrix}
\mathbb{I}_{|M|} & \boldsymbol{0} \\
\boldsymbol{L}_{|N|\times|M|}(\boldsymbol{u}_{j \in M})& \text{diag}(\exp\{\zeta_k(\boldsymbol{b}\boldsymbol{u})\}_{j\in N})
\end{bmatrix} \right)  \right|=\prod_{j \in N}\exp\left\{\zeta_k(\boldsymbol{bu})\right\}_j.
\end{equation*}
Since  $|\det(g_k)|$ only depends on the output of $\zeta_k$ and is irrelevant to the specific structure of $\zeta_k$ and $\upsilon_k$, we can define $\zeta_k$ and $\upsilon_k$ as complex as needed without worrying about the differentiability or invertibility of $g_i$.

Suppose we specify a $K$-layer transformation $g = g_K\circ  \cdots  \circ g_1$ with $g_k$ defined {as in \eqref{eq: Real_NVP}}. 
Let $\boldsymbol{\theta}$ denote all weights in $\zeta_k$ and $\upsilon_k$, $k=1,\cdots,K$. 
Then, the estimation of a Real NVP model boils down to finding $\widehat{\boldsymbol{\theta}}$ that maximize the likelihood $f_{\boldsymbol{Y}}(\boldsymbol{y};\boldsymbol{\theta})$ given a known  $f_{\boldsymbol{U}}(\boldsymbol{u})$.
Gradient descent and backpropagation are usually used to update $\boldsymbol{\theta}$ and find the maximum likelihood.
Due to the flexibility of $g$,  $f_{\boldsymbol{U}}(\boldsymbol{u})$ does not have a huge influence on the expressivity of $f_{\boldsymbol{Y}}(\boldsymbol{y})$ in most situations, so  $f_{\boldsymbol{U}}(\boldsymbol{u})$ is commonly set as a standard multivariate Gaussian \citep{dinh2017density}.
The above reveals that the RealNVP model can provide an exact density of $f_{\boldsymbol{Y}}(\boldsymbol{y};\boldsymbol{\theta})$, and estimating a RealNVP model is much like fitting a statistical distribution.
This exact likelihood inference is one of the main distinctions of normalizing flows from other generative models.

% One important feature of normalizing flow is that it directly models the probability of the data and provides exact likelihood estimation. 
% Represent all weights in a K layer flow in $\boldsymbol{\theta}_{\text{Flow}}$, its log-likelihood of can be written as 
% \begin{align*}
%   l(\boldsymbol{\theta}_{\text{Flow}}) = \sum_i \log f_{\boldsymbol{Y}}(\boldsymbol{y_i};\boldsymbol{\theta}_{\text{Flow}})= \sum_i \left\{\log f_{\boldsymbol{U}}(\boldsymbol{u}) -  \det g(\boldsymbol{u};\boldsymbol{\theta}_{\text{Flow}}) \right\}, \quad \boldsymbol{y}_i = g^{-1}(\boldsymbol{u}_i).
% \end{align*}
% The likelihood is evaluated by mapping $\boldsymbol{y}_i$ to the base samples $\boldsymbol{u}_i$ using the inverse transformation $g^{-1} = g_1^{-1}\circ \cdots \circ g_K^{-1}$, and then calculated using the $f_{\boldsymbol{U}}$.
% Maximum likelihood estimators of $\boldsymbol{\theta}_{\text{Flow}}$ can be obtained using standard stochastic gradient decent algorithm.
% A common choice of $f_{\boldsymbol{U}}$ is a Gaussian distribution.

% Although normalizing flow has achieved impressive success in multivariate density estimation, it is poorly performed in the extreme value literature. 
% This is because extreme observations are usually out-of-distribution and relies 

\subsection{GPDFlow}
We propose a flow-based mGPD called GPDFlow by expressing $f_{\boldsymbol{T}}$ in \eqref{eq: mGPD_Density} using Real NVP.
Specifically, let $\boldsymbol{x}$ denote a $d$-dimensional threshold exceedance observation, and $\boldsymbol{z}$ be a standardized version of $\boldsymbol{x}$, i.e. $\boldsymbol{z} = g_{\text{std}}(\boldsymbol{x};\boldsymbol{\sigma},\boldsymbol{\gamma})$ as defined in~\eqref{eq: transformation}, GPDFlow is a $d$-dimensional distribution with density function
\begin{align}
    f(\boldsymbol{x};\boldsymbol{\sigma},\boldsymbol{\gamma},\boldsymbol{\theta}) = \frac{\mathbbm{1}\{\max(\boldsymbol{z}>0)\}}{\exp\{\max{(\boldsymbol{z})}\}}\int_{-\infty}^{\infty}f_{\boldsymbol{U}}\{g^{-1}(\boldsymbol{z}+s;\boldsymbol{\theta})\}|\det J_g(\boldsymbol{z}+s;\boldsymbol{\theta})|^{-1} \mathrm{d} s \prod_{j=1}^d\frac{1}{\sigma_j+\gamma_jx_j},
    \label{eq: GPDFlow_density}
\end{align}
% \dcc{re-introduce x, z and explain in words why the product appears}\ch{Done}
where $f_{\boldsymbol{U}}(\boldsymbol{u})$ is the density of a $d$-dimensional standard Gaussian distribution, $\boldsymbol{\sigma}$, $\boldsymbol{\gamma}$ are marginal parameters, $g$ is the entire transformation in normalizing flows with $\boldsymbol{\theta}$ as the total weights.

Equation \eqref{eq: GPDFlow_density} can be understood in two ways. 
From a statistical perspective, we are parametrizing $f_{\boldsymbol{T}}$ in \eqref{eq: mGPD_Density} by normalizing flows; hence, GPDFlow is a valid mGPD and inherits all useful properties from it, e.g. threshold stability, lower dimensional conditional margins, and sum-stability under shape constraints \citep{kiriliouk2019peaks}, but allows a more flexible dependence modeling. 
Viewing  \eqref{eq: GPDFlow_density}  from a normalizing flows angle, a random variable $\boldsymbol{X}$ with density $f$ can be obtained by 
\begin{align}
    \boldsymbol{X} = g_{\text{std}}\circ g_{\text{mGPD}} \circ g(\boldsymbol{U}),
\end{align}
where $ g_{\text{mGPD}}(\boldsymbol{T}) = E+ \boldsymbol{T} - \max{(\boldsymbol{T})}, \; E \sim \exp(1)$.
For the output $g(\boldsymbol{U})$ from Real NVP, the transformation $g_{\text{mGPD}}$ \emph{squeezes} it into a reversed L-shape (see Figure \ref{fig: simulation_2}), ensuring that the output satisfies the mGPD's properties.
The final layer transformation $g_{\text{std}}$ {adjusts for the scale of the observations via $\boldsymbol{\sigma}$ and accounts for marginal tail heaviness through $\boldsymbol{\gamma}$.
% captures the scale of the observations by $\boldsymbol{\sigma}$ and the marginal tail heaviness by $\boldsymbol{\gamma}$.
A key advantage of the above structure is that it prevents Real NVP from directly modeling the marginal tail, where standard Real NVP often struggles to accurately estimate tail heaviness \citep{jaini2020tails}.}
% advantage of the above structure is that it avoids letting Real NVP directly fit the marginal tail, where vanilla Real NVP usually struggles to estimate the tail heaviness.

% The advantage of using GPDFlow on threshold exceedance data over vanilla Real NVP that masks non-threshold exceedance region is that GPDFlow avoid directly handling the heaviness of marginal tail, which Real NVP struggles to estimate. 
% Instead, the marginal tail decaying rate is explicitly controlled by $\boldsymbol{\gamma}$ in the transformation \eqref{eq: transformation}, and the flow model in GPDFlow is focused on dependence modeling, where the tail shape of $f_{\boldsymbol{T}}$ is less important (Will discuss it later). 

Unlike recent two-step generative approaches in the extreme value modeling, which first transform margins to a standard scale before estimating the dependence~\citep{boulaguiem2022modeling}, GPDFlow jointly estimates both the marginal distribution and tail dependence for threshold exceedances.
The marginal distribution of the mGPD in \eqref{eq: marginal_GPD} depends on $\ell$ and, consequently, on $\boldsymbol{T}$. 
As a result, the flexibility of $f_{\boldsymbol{T}}$ in GPDFlow helps reduce the bias in marginal estimations compared to the classic mGPD, which often struggles to balance marginal and tail dependence estimation. 
Traditional mGPD models typically rely on the censored likelihood method to alleviate the bias introduced by an inappropriate description of the lower tail margins \citep{kiriliouk2019peaks}.  However, this approach can become computationally expensive in high dimensions.

\section{Inference}
\label{sec: Inference}
\subsection{Likelihood and model estimation }
Given $d$-dimensional observations $\boldsymbol{x}_i = (x_{i1},\cdots, x_{id})$, $i=1,\cdots,n$, we estimate the parameter $(\boldsymbol{\sigma},\boldsymbol{\gamma},\boldsymbol{\theta})$ in GPDFlow by maximizing the full log-likelihood 
\begin{equation}
  \begin{aligned}
l(\boldsymbol{\sigma},\boldsymbol{\gamma},\boldsymbol{\theta}) &=\sum_i^n \left\{ \log\left\{\int_{-\infty}^{\infty}f_{\boldsymbol{U}}\{g^{-1}(\boldsymbol{z}_i+s;\boldsymbol{\theta})\}|\det J_g(\boldsymbol{z}_i+s;\boldsymbol{\theta})|^{-1} \mathrm{d} s \right\} \right.\\
&\left.+ \log{\mathbbm{1}\{\max(\boldsymbol{z}_i>0)\}}- \max{(\boldsymbol{z}_i)} - \sum_{j=1}^d\log(|\sigma_j + \gamma_jx_{ij}|) \right\},
  \quad \boldsymbol{z}_i = g_{\text{std}}(\boldsymbol{x}_i).
\end{aligned}  
\label{eq: loglikilihood_GPDFlow}
\end{equation}
The integrant in \eqref{eq: loglikilihood_GPDFlow}, which is the flow density, maps $\mathbb{R}^d$ to $\mathbb{R}$. 
However, the integral is univariate since the scalar $s$ is added to each component of $\boldsymbol{z}_i$ and the integration is over $s$.
As a result, the computation complexity does not scale with the dimension, and the integral can be effectively approximated using numerical methods such as the trapezoidal rule or Monte Carlo. 
In the practical implementation, $\boldsymbol{z}_i$ is expressed as 
$$\boldsymbol{z}_i = \mathbbm{1}\{\boldsymbol{\gamma} \neq \boldsymbol{0}\} \boldsymbol{\gamma}^{-1}\log \left(|\boldsymbol{\gamma}\boldsymbol{x}_i/\boldsymbol{\sigma}+1| \right) + \mathbbm{1}\{\boldsymbol{\gamma} = \boldsymbol{0}\}\boldsymbol{x}_i/\boldsymbol{\sigma}$$ to avoid taking the logarithm of a negative value.
Additionally, an extra penalty term $$ \lambda \sum_i \sum_j\mathbbm{1}\{\sigma_j+\gamma_jx_{ij}\leq0\}(\sigma_j+\gamma_j x_{ij})^2$$ is added to \eqref{eq: loglikilihood_GPDFlow} to penalize values of $\sigma_j$ and $\gamma_j$ that do not satisfy $\sigma_j + \gamma_jx_j >0$. 
Here, $\lambda$ is a hyperparameter that controls the penalty and is typically set to a large value.

An interesting characteristic of $\eqref{eq: mGPD_Density}$ is that $f_{\boldsymbol{T}}$ cannot be uniquely identified from $h(\boldsymbol{z})$.
This could be straightforwardly seen from $\eqref{eq: stocastic_representation}$ by noticing that, for any $d$-dimensional random vector $\boldsymbol{R}=(R,\cdots,R)$,
% expanded by an univariate random variable $R$, 
$\boldsymbol{T}$ and $\boldsymbol{T} + \boldsymbol{R}$ will always lead to same $\boldsymbol{Z}$.
The expression, $\boldsymbol{T} - \max{(\boldsymbol{T}) :=\boldsymbol{S}}$ is called spectral random vector in \cite{rootzen2018multivariate}, and its density $f_{\boldsymbol{S}}$ can be identified from $h(\boldsymbol{z})$.
A direct consequence of the unidentifiability of $f_{\boldsymbol{T}}$ is that $f_{\boldsymbol{T}}$ could have different estimated densities depending on the initial weights of Real NVP, an issue illustrated in Figure \ref{fig: f_T_sample}.
The reason we still build GPDFlow on $f_{\boldsymbol{T}}$ rather than $f_{\boldsymbol{S}}$ is that $f_{\boldsymbol{T}}$ is an unbounded density, whereas the support of $f_{\boldsymbol{S}}$ is defined by the union of the coordinate hyperplanes where at least one coordinate is zero. Applying such constraints of support for $f_{\boldsymbol{S}}$ in normalizing flows is difficult; by contrast, no constraints are required in the $f_{\boldsymbol{T}}$ modeling. Since we are only interested in $h(\boldsymbol{z})$ in the threshold exceedance modeling rather than specific $f_{\boldsymbol{T}} $, the unidentifiability of $f_{\boldsymbol{T}}$ is not an issue in the practical use of GPDFlow. 
% \dcc{add a line saying we don't care about ft but about T - max(T) which is identifiable}
\begin{figure}[H]
\begin{center}
\includegraphics[width=5in]{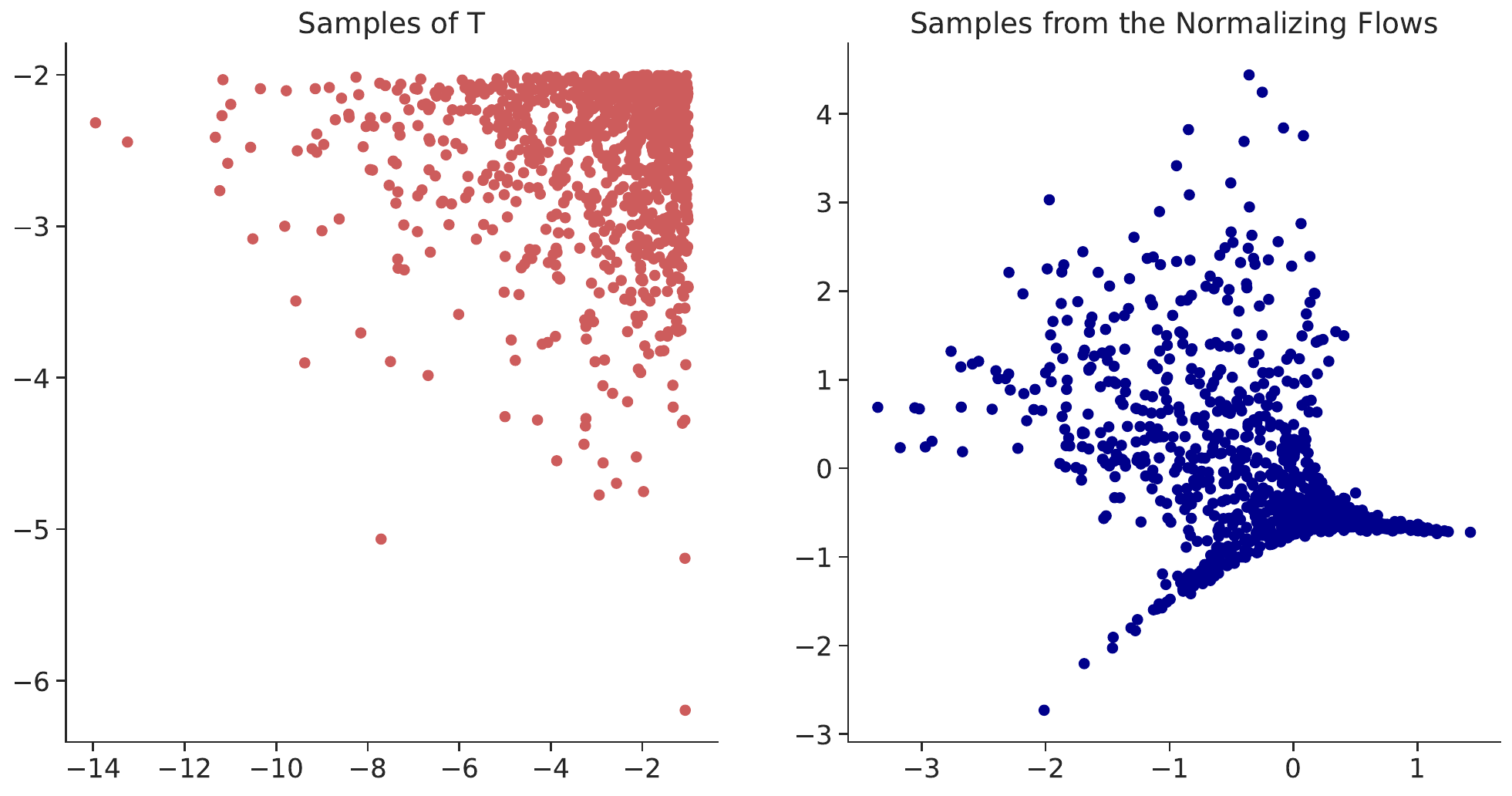}
\end{center}
\caption{\footnotesize{Comparison between 1000 samples (left) from $f_{\boldsymbol{T}}(t_1,t_2) = \exp\{(t_1+1)/2 + (t_2+2)/0.5\}\\ t_1<-1, t_2<-2$ , and samples (right) from estimated $\widehat{f}_{\boldsymbol{T}}$ from the mGPD data generated by $f_{\boldsymbol{T}}$, illustrating the unidentifiability of $f_{\boldsymbol{T}}$. }}
\label{fig: f_T_sample}
\end{figure}
Even though $f_{\boldsymbol{T}}$ is not identifiable, we aim to identify characteristics of the mGPD associated with $f_{\boldsymbol{T}}$, through which we can anticipate the behavior of samples from the fitted normalizing flows.
The following proposition addresses this by examining the bivariate tail dependence coefficient of the mGPD.
% The following proposition reveals what we expect to see in the fitted normalizing flow samples by examining the bivariate tail dependence coefficient of mGPD.

\begin{prop}
Suppose $F_1$ and $F_2$ are the marginal cdfs of a bivariate mGPD with generator $\boldsymbol{T}=(T_1,T_2)$. 
Define $q^*$ as the quantile level from which both components exceed the threshold, i.e., $q^* = \inf\{q\in(0,1):F_1^{-1}(q)>0 \text{ and } F_2^{-1}(q)>0 \}$ and let $
\chi_{1,2}(q) = \mathbb{P}(X_1>F_1^{-1}(q) | X_2>F_2^{-1}(q))$.
Then, the bivariate tail dependence coefficient of the mGPD, $\chi_{1,2} = \lim_{q \rightarrow 1^-} \chi_{1,2}(q)$, can be expressed as 
\begin{align}\label{eq:chi2}
    \chi_{1,2} =
\mathbb{E}\left(\min \left\{\frac{\exp\{T_1-\max(\boldsymbol{T})\}}{\mathbb{E}(\exp\{T_1-\max(\boldsymbol{T})\})},\frac{\exp\{T_2-\max(\boldsymbol{T})\}}{\mathbb{E}(\exp\{T_2-\max(\boldsymbol{T})\})}\right\}\right).
\end{align}
Specifically, $$\chi_{1,2}(q\mid q>q^*)=\chi_{1,2}.$$
% To make this clear, we introduce the notation $\chi_{1,2}(q; q>q^*)$ and use $\chi_{1,2}$ and $\chi_{1,2}(q; q>q^*)$ interchangeably.
% Specifically, $\chi_{1,2}(q) = \chi_{1,2}$ for any $q>q^*$.
If $T_1$ and $T_2$ are exchangeable, $\chi_{1,2}$ can be simplified to 
\begin{equation*}
   \chi_{1,2} =\chi_{1,2}(q\mid q>q^*)= \frac{2\mathbb{E}(\exp\{-|T_1-T_2|\})}{1+\mathbb{E}(\exp\{-|T_1-T_2|\})}.
\end{equation*}
\label{prop: chi_mGPD}
\end{prop}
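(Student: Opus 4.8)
The plan is to work on the standardized scale throughout, since the conditional probability $\chi_{1,2}(q)$ and its limit are invariant under the strictly increasing marginal map $g_{\text{std}}$, and then to exploit the T representation $\boldsymbol{Z} = E + \boldsymbol{S}$ with $\boldsymbol{S} = \boldsymbol{T} - \max(\boldsymbol{T})$ from \eqref{eq: stocastic_representation}. Writing $z_j = F_j^{-1}(q)$ and observing that for $q > q^*$ both $z_1, z_2 > 0$ while every $S_j \le 0$, I would first compute the joint survival function by conditioning on $\boldsymbol{S}$ and integrating out the independent unit exponential $E$. Because $z_j - S_j \ge z_j > 0$, the inner probability $\mathbb{P}(E > \max(z_1 - S_1, z_2 - S_2) \mid \boldsymbol{S})$ equals exactly $\exp\{-\max(z_1 - S_1, z_2 - S_2)\}$, so that
$$\mathbb{P}(Z_1 > z_1, Z_2 > z_2) = \mathbb{E}\left[\min\left(e^{S_1} e^{-z_1},\, e^{S_2} e^{-z_2}\right)\right].$$

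The linchpin is to identify the two normalizations that appear. On one hand, the quantile relation $1 - q = (1 - H_j(0))\, e^{-z_j}$ gives $e^{-z_j} = (1-q)/(1-H_j(0))$. On the other hand, a direct computation of the standardized margin yields $1 - H_j(0) = \mathbb{P}(Z_j > 0) = \mathbb{P}(E > -S_j) = \mathbb{E}[e^{S_j}]$, which is precisely the constant converting $e^{S_j}$ into the D-norm generator component $V_j = e^{S_j}/\mathbb{E}[e^{S_j}]$ from \eqref{eq: D_norm}. Substituting both into the display above, the factor $(1-q)$ factors out, giving
$$\mathbb{P}(Z_1 > z_1, Z_2 > z_2) = (1-q)\, \mathbb{E}[\min(V_1, V_2)],$$
while $\mathbb{P}(Z_2 > z_2) = 1 - q$ by construction. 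Hence $\chi_{1,2}(q) = \mathbb{E}[\min(V_1, V_2)]$ is \emph{exactly constant} for every $q > q^*$; this simultaneously proves $\chi_{1,2}(q \mid q > q^*) = \chi_{1,2}$ and, on letting $q \to 1^-$, delivers \eqref{eq:chi2}.

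For the exchangeable case I would simplify $\mathbb{E}[\min(V_1, V_2)]$ using exchangeability of $(S_1, S_2)$, which forces a common value $c := \mathbb{E}[e^{S_1}] = \mathbb{E}[e^{S_2}]$. The numerator uses $\min(S_1, S_2) = \min(T_1,T_2) - \max(T_1,T_2) = -|T_1 - T_2|$, so $\mathbb{E}[\min(e^{S_1}, e^{S_2})] = \mathbb{E}[e^{-|T_1-T_2|}]$; the normalization uses $e^{S_1} + e^{S_2} = 1 + e^{-|T_1 - T_2|}$, since one of the two exponents is always zero, whence $2c = 1 + \mathbb{E}[e^{-|T_1-T_2|}]$. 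Dividing the numerator by $c$ yields the stated closed form.

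I expect the main obstacle to be the bookkeeping around these two normalizations rather than any deep analytic difficulty: one must verify carefully that the quantile normalization $1 - H_j(0)$ coincides with the D-norm normalization $\mathbb{E}[e^{S_j}]$, which is exactly what makes the ratio collapse to a $q$-free constant. A secondary point requiring care is the restriction to $q > q^*$, where both thresholds are positive so that all exponents $z_j - S_j$ are nonnegative and the exponential tail applies without case distinctions; this is precisely the regime in which the clean identity holds. An alternative shortcut would invoke the general identity $\chi_{1,2} = 2 - \ell(1,1)$ together with $\ell(1,1) = \mathbb{E}[\max(V_1, V_2)]$ and $\mathbb{E}[V_j] = 1$ to get $\chi_{1,2} = \mathbb{E}[\min(V_1,V_2)]$, but this only recovers the limiting value and not the stronger statement that $\chi_{1,2}(q)$ is constant above $q^*$, so I would prefer the direct route.
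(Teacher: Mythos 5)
Your proof is correct, but it takes a genuinely different route from the paper for the main claim. The paper deduces the first part as the $d=2$ case of its Proposition \ref{prop: chi_omega}, whose proof is citation-based: it invokes the tail copula $R(\boldsymbol{x})=\mathbb{E}(\min(\boldsymbol{x}\boldsymbol{V}))$, its link to the stdf via inclusion--exclusion and the minimum--maximum identity, and Proposition 7.1 of \cite{rootzen2018multivariate} to equate the pre-limit ratio with $\mathbb{E}(\min(\boldsymbol{V}))$ for all $q>q^*$. You instead compute everything from scratch out of the T representation: integrating out the independent unit exponential gives the exact joint survival $\mathbb{P}(Z_1>z_1,Z_2>z_2)=\mathbb{E}[\min(e^{S_1}e^{-z_1},e^{S_2}e^{-z_2})]$ for $z_1,z_2>0$, and the key bookkeeping identity $1-H_j(0)=\mathbb{P}(E>-S_j)=\mathbb{E}[e^{S_j}]$ shows that the quantile normalization and the D-norm normalization are the same constant, so the ratio collapses to the $q$-free quantity $\mathbb{E}[\min(V_1,V_2)]$. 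What each approach buys: the paper's argument generalizes immediately to $\chi_{1:d}$ and $\omega_{1:d}$ in arbitrary dimension at the price of leaning on external results, whereas yours is self-contained, elementary, and makes the exact constancy above $q^*$ transparent rather than imported (and it too extends to general $d$ by replacing the bivariate min with a $d$-fold min, or the max for $\omega_{1:d}$). Your treatment of the exchangeable case coincides with the paper's proof essentially line by line: same constant $c=\mathbb{E}[e^{S_1}]=\mathbb{E}[e^{S_2}]$, same observation that one of the two spectral components is identically $1$ so that $e^{S_1}+e^{S_2}=1+e^{-|T_1-T_2|}$, same division. Your closing remark is also accurate: the shortcut $\chi_{1,2}=2-\ell(1,1)$ only recovers the limit, not the pre-limit constancy, which is why the direct route (or the paper's appeal to the tail copula identity above $q^*$) is needed for the full statement.
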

\vspace{-1cm}

Proposition \ref{prop: chi_mGPD} states that $\chi_{1,2}(q\mid q>q^*)$ is a positive constant and is equal to the bivariate tail dependence coefficient. 
When estimating GPDFlow, this constant line pattern of $\chi_{1,2}(q\mid q>q^*)$ is preserved regardless of the normalizing flows output, and the value of $\chi_{1,2}(q\mid q>q^*)$ or $\chi_{1,2}$ determines the general behavior of $f_{\boldsymbol{T}}$.
{Under the exchangeability assumption, $\chi_{1,2}$ depends solely on the distribution of the difference between $T_1$ and $T_2$.
If the data exhibits weak tail dependence (i.e., small $\chi_{1,2}$), $T_1$ and $T_2$ tend to spread across their sample spaces, exhibiting large differences.
In contrast, strong tail dependence (large $\chi_{1,2}$) implies that $T_1$ and $T_2$ are more concentrated around similar values, and display smaller discrepancies. 
In the extreme case where $\chi_{1,2}=1$, we have $T_1=T_2$ almost surely.}
A proof for the proposition \ref{prop: chi_mGPD} can be found in Section~\ref{sec:a_proofs} of the supplementary material.

% In the estimating of a GPDFlow, this constant line pattern of $\chi_{1,2}(q\mid q>q^*)$ is kept regardless of the normalizing flow output, and the value of $\chi_{1,2}(q\mid q>q^*)$ or $\chi_{1,2}$ determines a general pattern of $f_{\boldsymbol{T}}$.
% Under the exchangeable condition, $\chi_{1,2}$ only depends on the difference distribution of $T_1$ and $T_2$.
% When data exhibits weak tail dependence (i.e. small $\chi_{1,2}$), $T_1$ and $T_2$ tend to spread over their sample spaces and have larger variances. 
% Conversely, a strong tail dependence (large $\chi_{1,2}$) indicates $T_1$ and $T_2$ are concentrated around a similar value and have smaller variances. 
% In the extreme case where $\chi_{1,2}=1$, we have $T_1=T_2$ almost surely.

\subsection{Threshold selection}
Finding a suitable threshold is crucial in driving accurate tail inference in threshold exceedance modeling, which involves bias and variance trade-offs in the estimation \citep{coles2001introduction, Murphy17122024}.
In the univariate setting, a common approach is to identify a quantity, such as the shape parameter $\xi$ of the GPD, that remains constant across suitable thresholds and use a diagnostic plot to choose a threshold where this quantity stabilizes.
In the multivariate setting, we could exploit Proposition \ref{prop: chi_mGPD}, which states that $\chi_{1,2}(q)$ remains constant for sufficiently large $q$.
But instead of focusing on $\chi_{1,2}(q)$, we follow~\cite{kiriliouk2019peaks} and generalized $\chi_{1,2}(q)$ to dimension $d > 2$.
Similar to Proposition \ref{prop: chi_mGPD}, let $F_j$ be the marginal cdf of a $d$-dimensional mGPD, and $q^* = \inf\{q\in(0,1):F_j^{-1}(q)>0 \text{ for } 
\forall j = 1,\cdots,d\}$, define $\chi_{1:d}(q)$  as 
\begin{equation*}
\begin{aligned}
\chi_{1:d}(q) &= \frac{\mathbb{P}\{\bigcap_{j=1}^d \{X_j>F_j^{-1}(q)\} \}}{1-q},
\end{aligned}
\end{equation*}
and $\chi_{1:d} = \lim_{q \rightarrow 1^-}\chi_{1:d}(q)$, then for any $q^*<q<1$, , we have
\begin{equation}
\begin{aligned}
\chi_{1:d}(q\mid q>q^*) = \chi_{1:d}.
\end{aligned}
\label{eq: chi_1:d}
\end{equation}
% In the multivariate setting, $\chi_{1,2}(q)$ defined in Proposition \eqref{prop: chi_mGPD} is an appealing quantity for threshold selection, as it remains constant for the mGPD when all components exceed the threshold.
% Generalize the $\chi_{1,2}(q)$ and $\chi_{1,2}$ to the dimension higher than two, it can be shown that
% \begin{equation}
% \begin{aligned}
% \chi_{1:d}(q;q>q^*) &:= \frac{\mathbb{P}\{\bigcap_{j=1}^d \{X_j>F_j^{-1}(q)\} \}}{1-q}\\
% &=\lim_{q\rightarrow 1^{-}} \frac{\mathbb{P}\{\bigcap_{j=1}^d \{X_j>F_j^{-1}(q) \}\}}{1-q}\\
% &:=\chi_{1:d}\\
% &= \mathbb{E}\left(\min \left\{\frac{e^{T_1-\max(\boldsymbol{T})}}{\mathbb{E}(e^{T_1-\max(\boldsymbol{T})})},\cdots,\frac{e^{T_d-\max(\boldsymbol{T})}}{\mathbb{E}(e^{T_d-\max(\boldsymbol{T})})}\right\}\right),
% \end{aligned}
% \label{eq: chi_1:d}
% \end{equation}
% where $q^* = \inf\{q\in(0,1):F_j^{-1}(q)>0 \text{ for } 
% \forall j = 1,\cdots,d\}$.
To determine an appropriate threshold, \cite{kiriliouk2019peaks} suggested calculating the empirical $\chi_{1:d}(q)$ from the data and using a diagnostic plot to define the threshold as the minimal marginal $q$-quantile such that $\widehat{\chi}_{1:d}(q) \approx \chi_{1:d}$ holds for $q>q^*$.
There are two concerns with Kiriliouk's method. 
Firstly, $\widehat{\chi}_{1:d}(q)$ inevitably goes to zero even for a relatively small $q$ in high dimensions due to sparsity in the joint exceedance region.
Hence, this approach is less effective in high-dimensional applications.
Secondly, $\widehat{\chi}_{1:d}(q)$ only considers the joint exceedance probability. 
When probabilities of partial threshold exceedances are of interest, $\chi_{1:d}(q)$ fails to evaluate the partial lower tail region, i.e., the region where some components fall below the threshold while others exceed it.

We therefore consider another quantity $\omega_{1:d}(q)$, defined as
% $\omega_{1:d}(q) = \lim_{q\rightarrow 1^{-}}\omega_{1:d}(q;q>q^*)$ with
\begin{equation*}
\begin{aligned}
\omega_{1:d}(q) &= \frac{\mathbb{P}\{\bigcup_{j=1}^d \{ X_j>F_j^{-1}(q)\} \}}{1-q}.
% &=\lim_{q\rightarrow 1^{-}} \frac{\mathbb{P}\{\bigcup_{j=1}^d, \{X_j>F_j^{-1}(q)\} \}}{1-q}\\
% &:=\omega_{1:d}.
\end{aligned}
\end{equation*}
Let $\omega_{1:d} = \lim_{q \rightarrow1^-}\omega_{1:d}(q)$ and $q^*$ be the same as in \eqref{eq: chi_1:d}, we have 
\begin{equation*}
\begin{aligned}
\omega_{1:d}(q \mid q>q^*) = \omega_{1:d}.
\end{aligned}
\end{equation*}
% and $\omega_{1:d}$ defined as follows to complement the drawbacks of $\chi_{1:d}(q)$ and $\chi_{1:d}$:
% \begin{equation}
% \begin{aligned}
% \omega_{1:d}(q;q>q^*) &:= \frac{\mathbb{P}\{\bigcup_{j=1}^d \{ X_j>F_j^{-1}(q)\} \}}{1-q}\\
% &=\lim_{q\rightarrow 1^{-}} \frac{\mathbb{P}\{\bigcup_{j=1}^d, \{X_j>F_j^{-1}(q)\} \}}{1-q}\\
% &:=\omega_{1:d}\\
% &= \mathbb{E}\left(\max\left\{\frac{e^{T_1-\max(\boldsymbol{T})}}{\mathbb{E}(e^{T_1-\max(\boldsymbol{T})})},\cdots,\frac{e^{T_d-\max(\boldsymbol{T})}}{\mathbb{E}(e^{T_d-\max(\boldsymbol{T})})}\right\}\right).
% \end{aligned}
% \label{eq: omega_1:d}
% \end{equation}
Similar to $\chi_{1:d}(q)$ and $\chi_{1:d}$, $\omega_{1:d}(q)$ is constant when $q>q^*$ and hence equal to its limit $\omega_{1:d}$.
The advantage of using $\omega_{1:d}(q)$ and $\omega_{1:d}$ is that they properly account for a suitable description of the lower tail margins of the mGPD, which is crucial if partial exceedances are of interest. 
Additionally, $\omega_{1:d}(q)$ accounts for all combinations of threshold exceedances,  hence its empirical estimation is more stable than the empirical $\chi_{1:d}(q)$ due to the relatively large number of exceedances in the numerator, especially when the tail dependence of the data is weak.
The following proposition summarizes the above results and further establishes the form of $\chi_{1:d}$ and $\omega_{1:d}$ for an mGPD with random vector $\boldsymbol{T}$ as the generator.
Proofs are deferred to Section~\ref{sec:a_proofs} of the supplementary material.
\begin{prop}
    Let  $F_j$ be the cdf of an mGPD, $q^* = \inf\{q\in(0,1):F_j^{-1}(q)>0 \text{ for } \forall j = 1,\cdots,d\}$, then the following holds for $\chi_{1:d}$, $\chi_{1:d}(q)$, $\omega_{1:d}$, $\omega_{1:d}(q)$ defined previously:
    \begin{equation}
\begin{aligned}
\chi_{1:d}(q \mid q>q^*) = \chi_{1:d}
&= \mathbb{E}\left(\min \left\{\frac{\exp\{T_1-\max(\boldsymbol{T})\}}{\mathbb{E}(\exp\{T_1-\max(\boldsymbol{T})\})},\cdots,\frac{\exp\{T_d-\max(\boldsymbol{T})\}}{\mathbb{E}(\exp\{T_d-\max(\boldsymbol{T})\})}\right\}\right),\\
\omega_{1:d}(q \mid q>q^*)=\omega_{1:d}
&= \mathbb{E}\left(\max\left\{\frac{\exp\{T_1-\max(\boldsymbol{T})\}}{\mathbb{E}(\exp\{T_1-\max(\boldsymbol{T})\})},\cdots,\frac{\exp\{T_d-\max(\boldsymbol{T})\}}{\mathbb{E}(\exp\{T_d-\max(\boldsymbol{T})\})}\right\}\right).
\end{aligned}
\end{equation}
\label{prop: chi_omega}
\end{prop}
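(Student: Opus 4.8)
The plan is to reduce everything to the spectral form of the T representation and exploit the memorylessness of the exponential factor $E$. Since both $\chi_{1:d}(q)$ and $\omega_{1:d}(q)$ are functionals only of the marginal exceedance events $\{X_j > F_j^{-1}(q)\}$, they are invariant under the increasing elementwise marginal transformation $g_{\text{std}}$; hence I would argue on the standardized mGPD $\boldsymbol{Z}$ without loss of generality. Writing $\boldsymbol{S} = \boldsymbol{T} - \max(\boldsymbol{T})$ for the spectral vector, so that $S_j \le 0$ with $\max_j S_j = 0$ and $\boldsymbol{Z} = E + \boldsymbol{S}$ with $E \sim \exp(1)$ independent of $\boldsymbol{S}$ by \eqref{eq: stocastic_representation}, I would set $c_j = \mathbb{E}[\exp\{S_j\}]$ and recognize the D-norm generator $V_j = \exp\{S_j\}/c_j$, which satisfies $\mathbb{E}(V_j) = 1$ and matches the quantities in the claimed formula.

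First I would compute the standardized margins. For any $z > 0$, conditioning on $\boldsymbol{S}$ and using $z - S_j \ge z > 0$ gives $\mathbb{P}(Z_j > z) = \mathbb{E}[\exp\{-(z - S_j)\}] = c_j \exp\{-z\}$. Consequently, for $q > q^*$ (so that $F_j^{-1}(q) > 0$ for every $j$, by definition of $q^*$), the $q$-quantile of $Z_j$ is $z_j(q) = \log\{c_j/(1-q)\}$, and in particular $\log c_j - S_j = -\log V_j$, an identity I will use to trigger the final cancellation.

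Next I would evaluate the two joint probabilities by conditioning on $\boldsymbol{S}$ and reducing each event to a single exceedance of $E$. For the intersection, $\bigcap_j \{Z_j > z_j(q)\} = \{E > \max_j (z_j(q) - S_j)\}$, and since every argument is positive for $q > q^*$ the memoryless property yields the conditional probability $\exp\{-\max_j(z_j(q) - S_j)\}$. Substituting $z_j(q)$ and using $\max_j(\log c_j - S_j) = -\min_j \log V_j$ simplifies this to $(1-q)\min_j V_j$; taking expectation over $\boldsymbol{S}$ and dividing by $1-q$ gives $\chi_{1:d}(q) = \mathbb{E}[\min_j V_j]$, which is free of $q$ and therefore equals $\chi_{1:d}$. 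Symmetrically, the union event is $\{E > \min_j(z_j(q) - S_j)\}$, whose conditional probability collapses to $(1-q)\max_j V_j$, giving $\omega_{1:d}(q) = \mathbb{E}[\max_j V_j] = \omega_{1:d}$. This is exactly the argument behind Proposition \ref{prop: chi_mGPD}, extended to general $d$ and to the union.

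The computation is short once the representation is in place, so the point requiring care — rather than a deep obstacle — is the positivity bookkeeping that makes the cancellation exact. For both events I must ensure that every term $z_j(q) - S_j$ is strictly positive, so that $\mathbb{P}(E > t) = \exp\{-t\}$ applies with no truncation at $t \le 0$; this is precisely what $q > q^*$ guarantees, since $S_j \le 0$ and $z_j(q) > 0$ force $z_j(q) - S_j \ge z_j(q) > 0$. The exact cancellation of the $1-q$ factors is what upgrades the limiting identities $\chi_{1:d} = \lim_{q \to 1^-}\chi_{1:d}(q)$ and $\omega_{1:d} = \lim_{q \to 1^-}\omega_{1:d}(q)$ to the stronger constancy statements $\chi_{1:d}(q \mid q>q^*) = \chi_{1:d}$ and $\omega_{1:d}(q\mid q>q^*) = \omega_{1:d}$, after verifying that $\mathbb{E}[\min_j V_j]$ and $\mathbb{E}[\max_j V_j]$ are the displayed expressions with $V_j = \exp\{T_j - \max(\boldsymbol{T})\}/\mathbb{E}(\exp\{T_j - \max(\boldsymbol{T})\})$.
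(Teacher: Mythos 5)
Your proof is correct, and it reaches the result by a genuinely different route than the paper. The paper's proof works at the level of dependence functionals: it introduces the tail copula $R(\boldsymbol{x})$, identifies $R(\boldsymbol{x})=\mathbb{E}(\min(\boldsymbol{x}\boldsymbol{V}))$ and $\ell(\boldsymbol{x})=\mathbb{E}(\max(\boldsymbol{x}\boldsymbol{V}))$ with the D-norm generator $\boldsymbol{V}$ of \eqref{eq: D_norm}, and then cites Proposition~7.1 of \cite{rootzen2018multivariate}, which asserts that for an mGPD and $q>q^*$ the joint and union exceedance probabilities are given \emph{exactly} by these diagonal evaluations; first-order homogeneity of $R$ and $\ell$ then cancels the $1-q$ and gives constancy in $q$. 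You instead re-derive the needed special case of that cited result from first principles: after reducing to the standardized scale via invariance of the exceedance events (and of $q^*$) under the increasing map $g_{\text{std}}$, you compute the margins $\mathbb{P}(Z_j>z)=c_j e^{-z}$ with $c_j=\mathbb{E}(\exp\{S_j\})$ directly from the T representation \eqref{eq: stocastic_representation}, collapse the intersection and union events into a single exceedance of $E$, and exploit the exponential survival function together with the positivity of every $z_j(q)-S_j$ for $q>q^*$ to obtain exact cancellation of the $(1-q)$ factors. The paper's route is shorter and situates the proposition inside standard stdf/tail-copula theory; your route is self-contained (no appeal to structural results beyond the representation already stated in the paper), makes transparent why the identities are exact rather than merely asymptotic, and as a by-product exhibits $q^*$ explicitly as $1-\min_j c_j$. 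Your argument also directly contains the first statement of Proposition \ref{prop: chi_mGPD}, consistent with how the paper organizes that proposition as the $d=2$ case of Proposition \ref{prop: chi_omega}.
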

In practice, one can use both $\chi_{1:d}(q)$ and $\omega_{1:d}(q)$ to find two appropriate thresholds and take the component-wise maxima as the final threshold. 
This approach is sketched in Algorithm \eqref{alg: thres_selection}.

\begin{algorithm}
\caption{\textbf{Threshold Selection Method for GPDFlow}}\label{alg: thres_selection}
\vspace{0.1cm}
\hspace{0.3cm} \textbf{Input}: All raw observations (rather than only the threshold‐exceedance subset) \\   \hspace*{1.8cm} $ \boldsymbol{y}_i=(y_{i1},\cdots,y_{id}), \quad i = 1,\cdots,N$
\begin{algorithmic}
\State 1. Calculate the empirical marginal cdf $\hat{F}_j(y_j) = \sum_{i=1}^N \mathbbm{1}\{y_{ij}<y_j\}/N $ for $j=1,\cdots,d$
\State 2. For $q \in (0,1)$, calculate the empirical value of 
    \Indent
          \State $\widehat{\chi}_{1:d}(q) = \frac{ \sum_{i=1}^N \mathbbm{1}\{\bigcap_{j=1}^d \{\widehat{F}_j(Y_j)>q\}\}}{N(1-q)}$
          \State $\widehat{\omega}_{1:d}(q) = \frac{ \sum_{i=1}^N \mathbbm{1}\{\bigcup_{j=1}^d \{ \widehat{F}_j(Y_j)>q\}\}}{N(1-q)}$
    \EndIndent
\State 3. Find the minimal $q_{\chi}$ such that $\widehat{\chi}_{1:d}(q)$ is approximately constant for $q>q_{\chi}$
\State 4. Find the minimal $q_{\omega}$ such that $\widehat{\omega}_{1:d}(q)$ is approximately constant for $q>q_{\omega}$
\State 5. Set $q^* = \max\{(q_{\chi},q_{\omega})\}$. The threshold is given by $(\hat{F}_1^{-1}(q^*),\cdots, \hat{F}_d^{-1}(q^*))$\\
\textbf{Output}: $d$-dimensional threshold vector $(\hat{F}_1^{-1}(q^*),\cdots, \hat{F}_d^{-1}(q^*))$
\end{algorithmic}
\end{algorithm}

\section{Simulation}
\label{sec: simulation}
We design two simulation scenarios to assess the estimation of GPDFlow. 
The first one is a well-specified scenario, where data are simulated from a traditional parametric mGPD. 
The purpose is to evaluate if GPDFlow can be correctly estimated.
Specifically, we assess dependence and marginal estimation by examining all pairwise bivariate tail dependence coefficient estimates and marginal parameters $\boldsymbol{\sigma}$ and $\boldsymbol{\gamma}$.
The second scenario is misspecified to test the model's robustness, with data drawn from a distribution constructed using the copula approach.
Here, we compare the estimated density of threshold exceedance data and the joint exceedance probability.
% focus on the comparison of the density over threshold exceedance data and the joint exceedance probability.

In the first scenario, we evaluate the goodness of fit of GPDFlow in dimensions $d=2,3,5$. 
The data are simulated from a parametric mGPD with $\boldsymbol{\sigma} = (0.5, 1.2, 1, 1.5, 0.8)$, $\boldsymbol{\gamma} = (-0.1, 0.2, 0, 0.15, -0.05)$, and  $f_{\boldsymbol{T}}(\boldsymbol{t}) = \prod_{j=1}^d \exp\{t_j+\beta_j\}/a_j$, where $\boldsymbol{a}=(2,0.5,1, 5,1.5)$ and $\boldsymbol{\beta} = (1, 2,3,4,5)$.
For $d=$ 2 or 3, the parameters are restricted to the first $d$ elements of each vector.
% \dcc{not sure what you mean. Do you mean to say something along the lines of ``For $d=$ 2 or 3, the parameters are restricted to the first $d$ elements of each vector''?}
% \ch{Exactly}
The flow model for $f_{\boldsymbol{T}}$ consists of 16 layers, where the log-scale and translation in each layer are MLPs with one hidden layer and $4d$ hidden neurons. 
% For an experiment in each dimension, 
For each dimension, we generate 100 samples from the parametric mGPD above and fit a GPDFlow using the specified flow architecture to assess robustness under sparse data conditions. 
Given the small sample size, we train GPDFlow with a relatively large epoch (200) to ensure a comprehensive update of the flow parameters. 
The total runtime for this model is approximately 2 minutes on an NVIDIA RTX A6000.

We repeat the experiment 100 times for each dimension, and Figure \ref{fig: simulation_1} displays a comparison between the theoretical values of pairwise tail dependence, marginal parameters, and their corresponding Monte Carlo estimates. 
The closed-form of the theoretical pairwise tail dependence coefficient of the given $f_{\boldsymbol{T}}$ in this simulation scenario can be found in the supporting materials of \cite{kiriliouk2019peaks}.
Overall, the estimation performance is satisfactory, with most theoretical values falling within the interquartile range of the GPDFlow estimates. 
The model effectively captures both strong tail dependence ($\chi \approx 0.72$) and weak tail dependence  ($\chi \approx 0.35$) present in the same dataset ($d=5$).
Additionally, GPDFlow accurately estimates the values of $\gamma_j$, correctly identifying whether each margin has a heavy, Pareto-like tail ($\gamma_j>0$),  a short or bounded tail ($\gamma_j<0$), or a light, exponential tail ($\gamma_j=0$).
Notably, the estimations remain robust across dimensions, as we do not observe increasing uncertainty or significant bias as $d$ grows.
% When looking at the estimation in a certain dimension, bias and variance could vary among the components.
\begin{figure}[H]
\begin{center}
\includegraphics[scale=0.28]{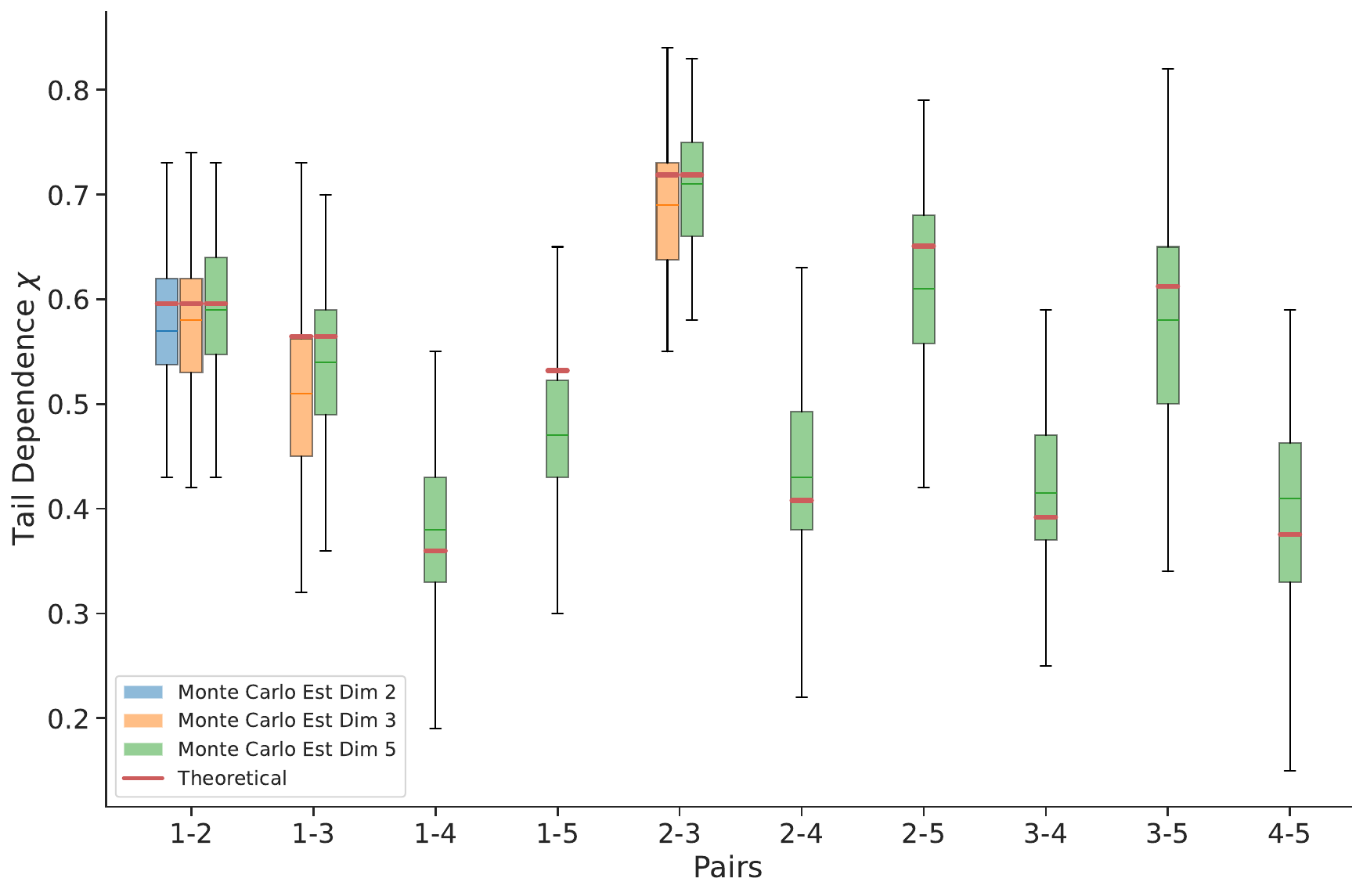}\\
\includegraphics[scale=0.28]{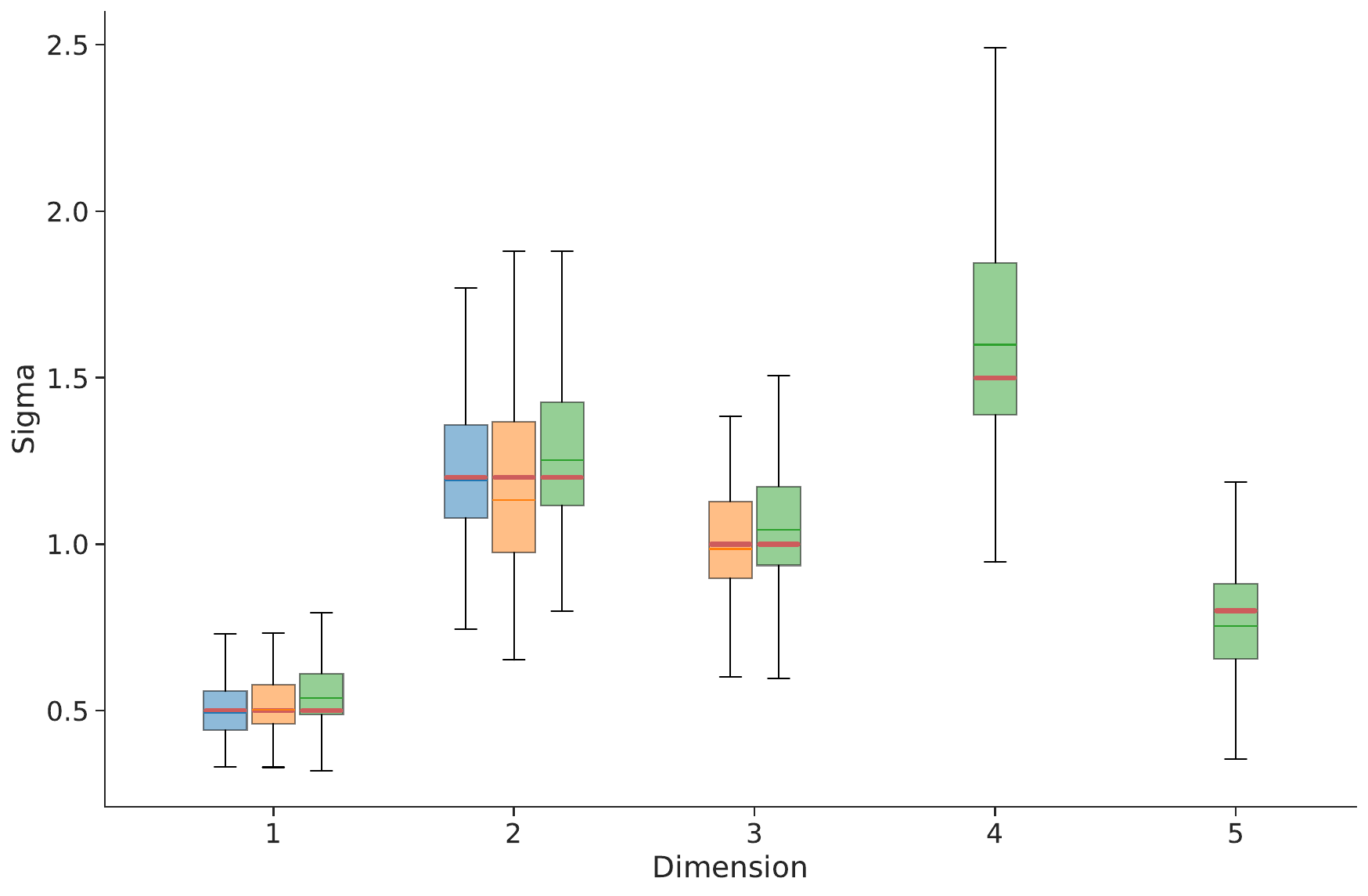}
\includegraphics[scale=0.28]{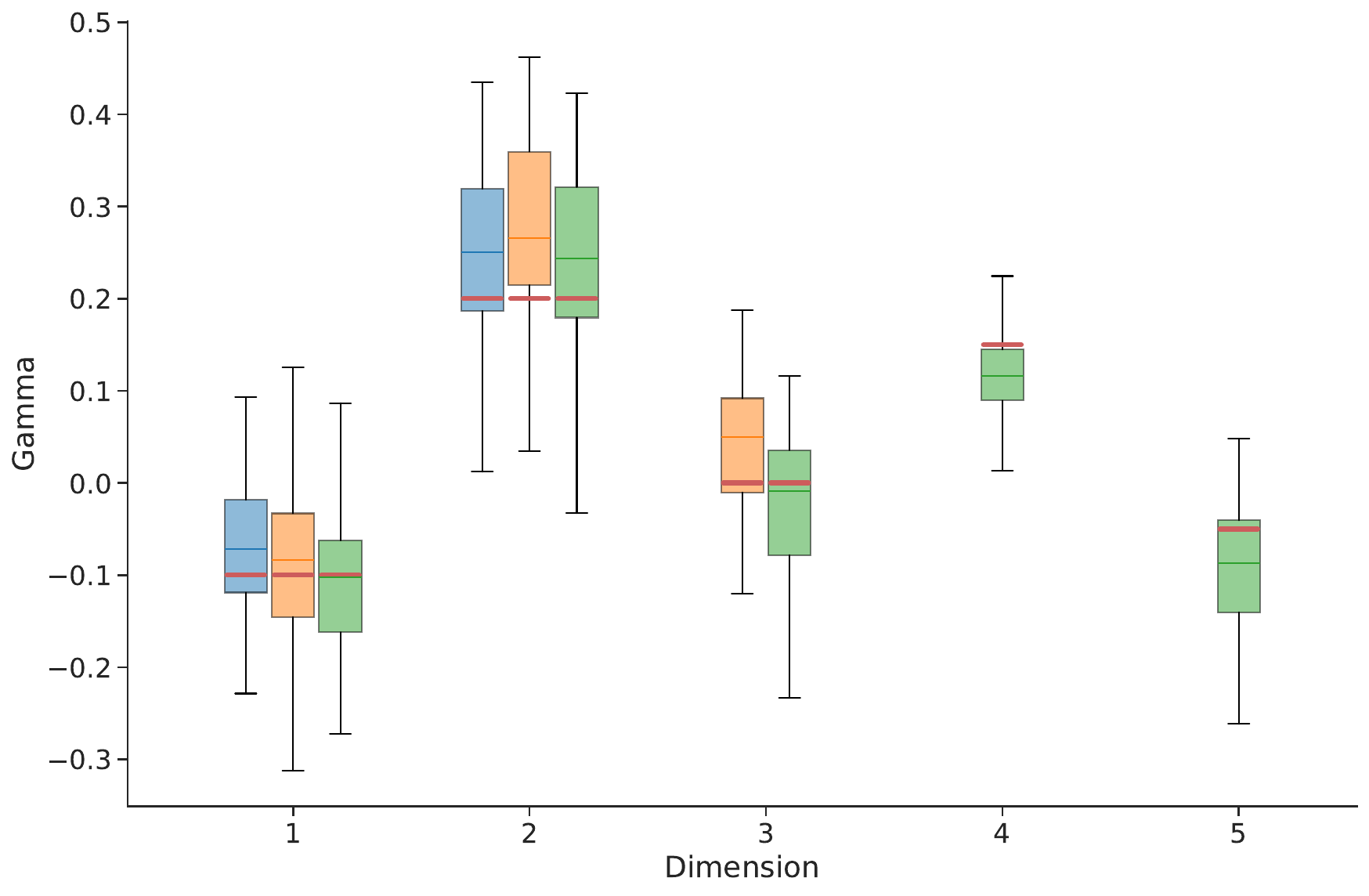}
\end{center}
\caption{\footnotesize{Boxplots of estimated pairwise $\chi$ (Top) and marginal parameters $\boldsymbol{\sigma}$, $\boldsymbol{\gamma}$ (Bottom two) across dimension $d=$  2, 3, and 5.
The true model is a mGPD with
$\boldsymbol{\sigma} = (0.5, 1.2, 1, 1.5, 0.8)$, $\boldsymbol{\gamma} = (-0.1, 0.2, 0, 0.15, -0.05)$, and $f_{\boldsymbol{T}}(\boldsymbol{t}) = \prod_{j=1}^d \exp\{t_j+\beta_j\}/a_j$, where $\boldsymbol{a}=(2, 0.5, 1, 5, 1.5)$ and $\boldsymbol{\beta} = (1, 2,3,4,5)$.
Point estimates for $\boldsymbol{\sigma}$ and $\boldsymbol{\gamma}$ are directly obtained from the GPDFlow estimation based on 100 simulations, while the point estimates for $\chi$ are based on the empirical values derived from 10,000 predictions in each of the 100 simulations.
}}
\label{fig: simulation_1}
\end{figure}
In the second scenario, we assess GPDFlow's estimation performance when data are not in the mGPD framework. 
To facilitate visualization, simulations are conducted in two dimensions.
The data are generated from a bivariate Gumbel copula $C_{\text{Gumbel}}(\cdot;\theta)$ with parameter $\theta=1.3$, and two Gaussian margins with location and scale parameters $(\mu_1,s_1)= (1,3)$ and  $(\mu_2,s_2)=(2,5)$.
In other words, the data comes from the distribution
\begin{align}\label{eq:sim2dist}
    F_{Y_1,Y_2}(y_1,y_2)= C_{\text{Gumbel}}\{\Phi[s_1^{-1}(y_1-\mu_1)], \Phi[s_2^{-1}(y_2-\mu_2)];\theta\},
\end{align}
where $C_{\text{Gumbel}}(u,v;\theta)=\exp\{-[\left(-\log u\right)^{\theta}+(-\log v)^{\theta}]^{1/\theta}\}$, and $\Phi$ is the cdf of a standard Gaussian distribution.
We generate 1200 samples, and Algorithm \ref{alg: thres_selection} identifies $q=0.95$ as a suitable threshold, yielding approximately 100 threshold exceedance samples. 
These exceedance samples are then fitted using a GPDFlow model with 16 layers of transformations, each consisting of MLPs with one hidden layer and 20 hidden neurons.
To assess estimation accuracy, we repeat this procedure 100 times and compare the Monte Carlo mean of the GPDFlow estimates with the corresponding theoretical values.

The top two plots in Figure \ref{fig: simulation_2} compare the theoretical and GPDFlow-estimated densities of the threshold exceedance data.
The theoretical density 
is obtained as 
\begin{align}\label{eq: sim2dens}
    f_{X_1,X_2}(x_1,x_2) = \frac{f_{Y_1,Y_2}(x_1+\tau_1,x_2 + \tau_2)}{1-C_{\text{Gumbel}}(0.95,0.95;\theta)}\mathbbm{1}\left\{\bigcup_{j=1}^2  \left\{\Phi \left(\frac{x_j+\tau_j-\mu_j}{s_j}\right)>0.95 \right\}\right\},
\end{align}
where $\boldsymbol{\tau}=(\tau_1,\tau_2)$ is the threshold, and $\tau_j$ is the 0.95-quantile of margin $j$, $j=1,2$ .
\eqref{eq: sim2dens} corresponds to a truncated $f_{Y_1,Y_2}(y_1,y_2)$ with support only in the region where at least one component exceeds the threshold. To facilitate the visualization, we shift this support by subtracting the threshold, and positioning the new boundary along negative x- and y-coordinates. 
The contour plots show that GPDFlow generally provides a good approximation of the theoretical density, particularly in high-density regions. 
Minor discrepancies appear in the lower tail and areas where the density is close to zero.
% especially for the high-density region. 
% Only tiny differences are observed at the lower tail and regions where density is close to 0.
These differences arise because GPDFlow defines a bounded distribution with support depending on $\boldsymbol{\gamma}$, whereas the theoretical density remains unbounded.

\begin{figure}[!htbp]
\begin{center}
\includegraphics[scale=0.32]{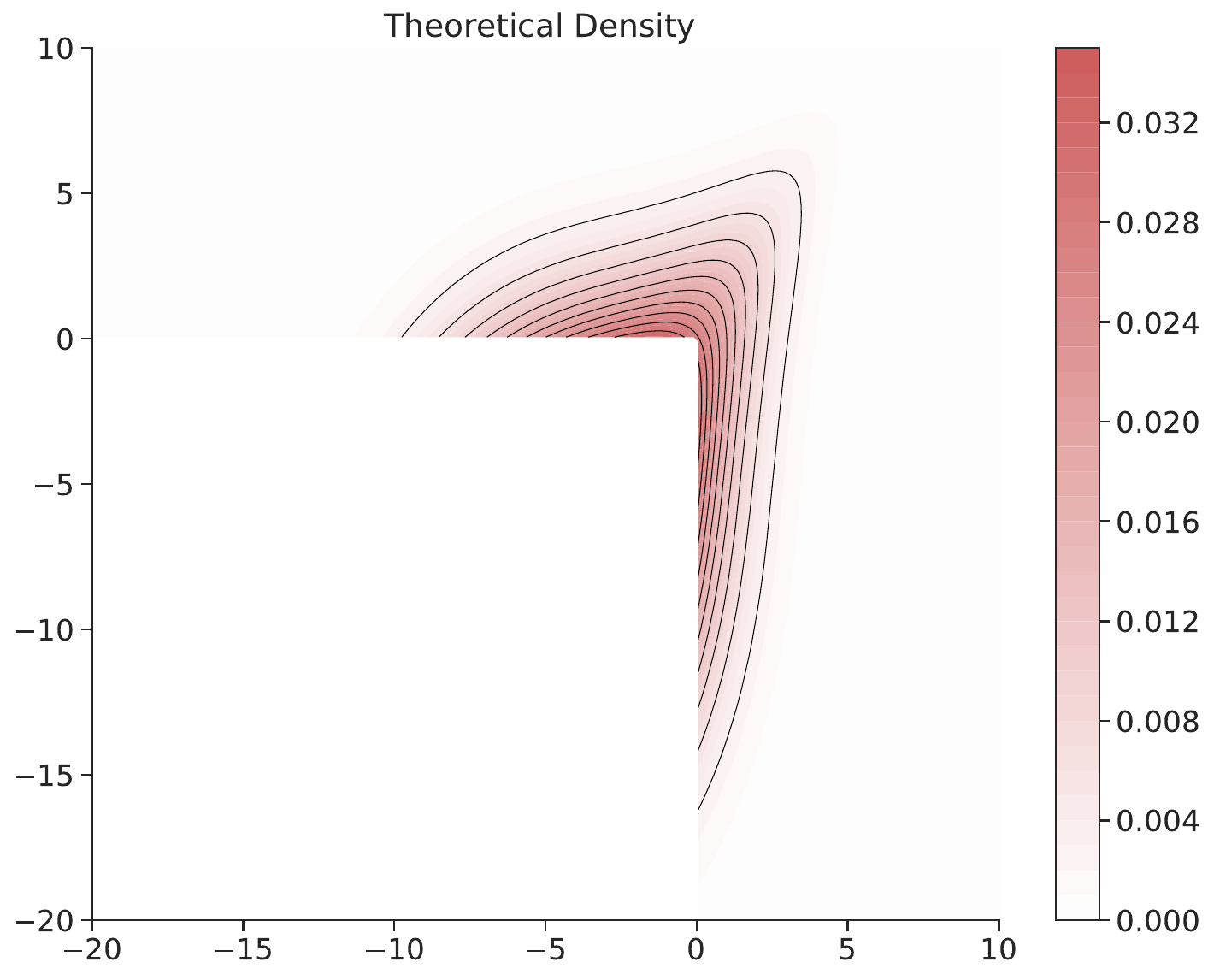}
\includegraphics[scale=0.32]{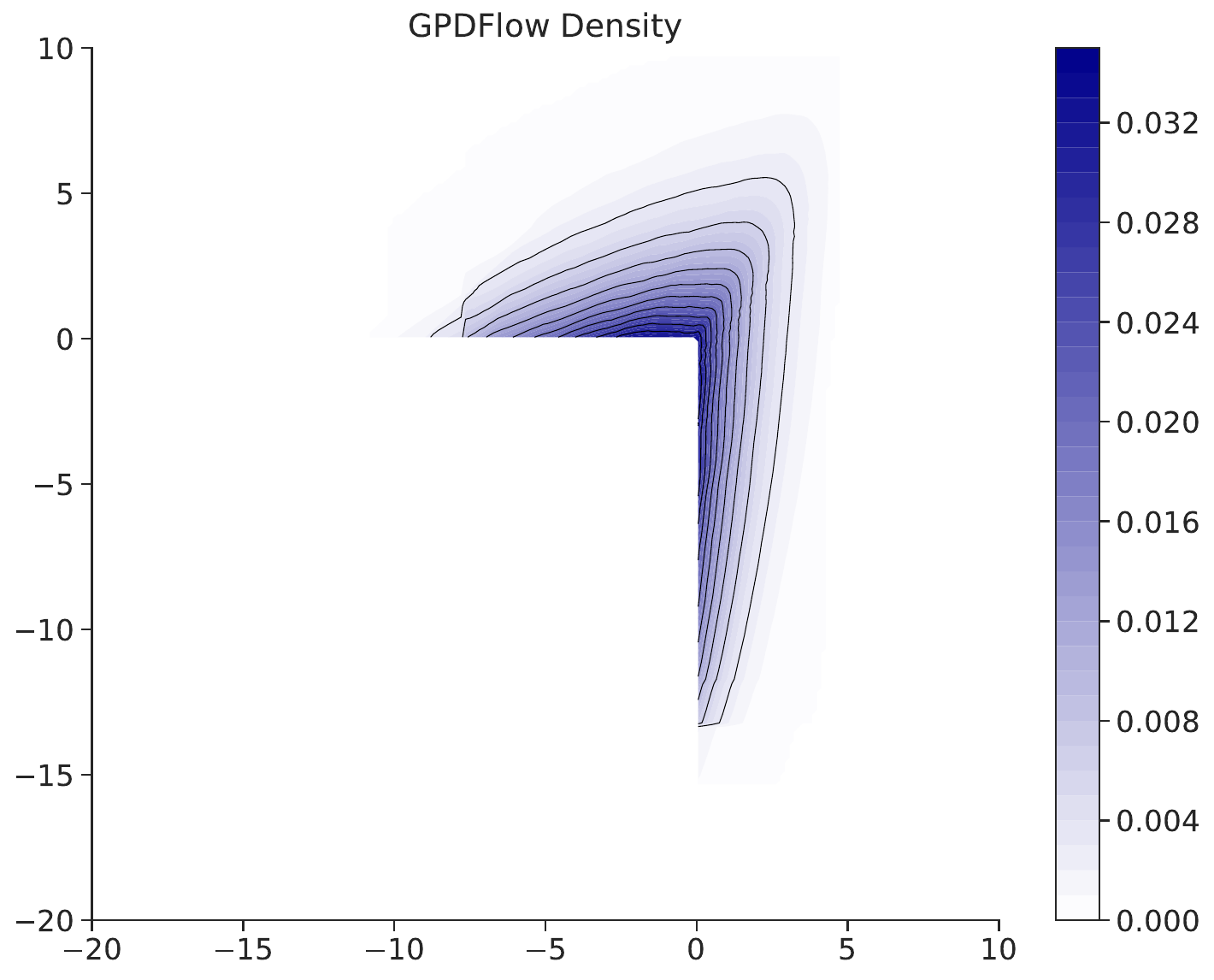}
\includegraphics[scale=0.32]{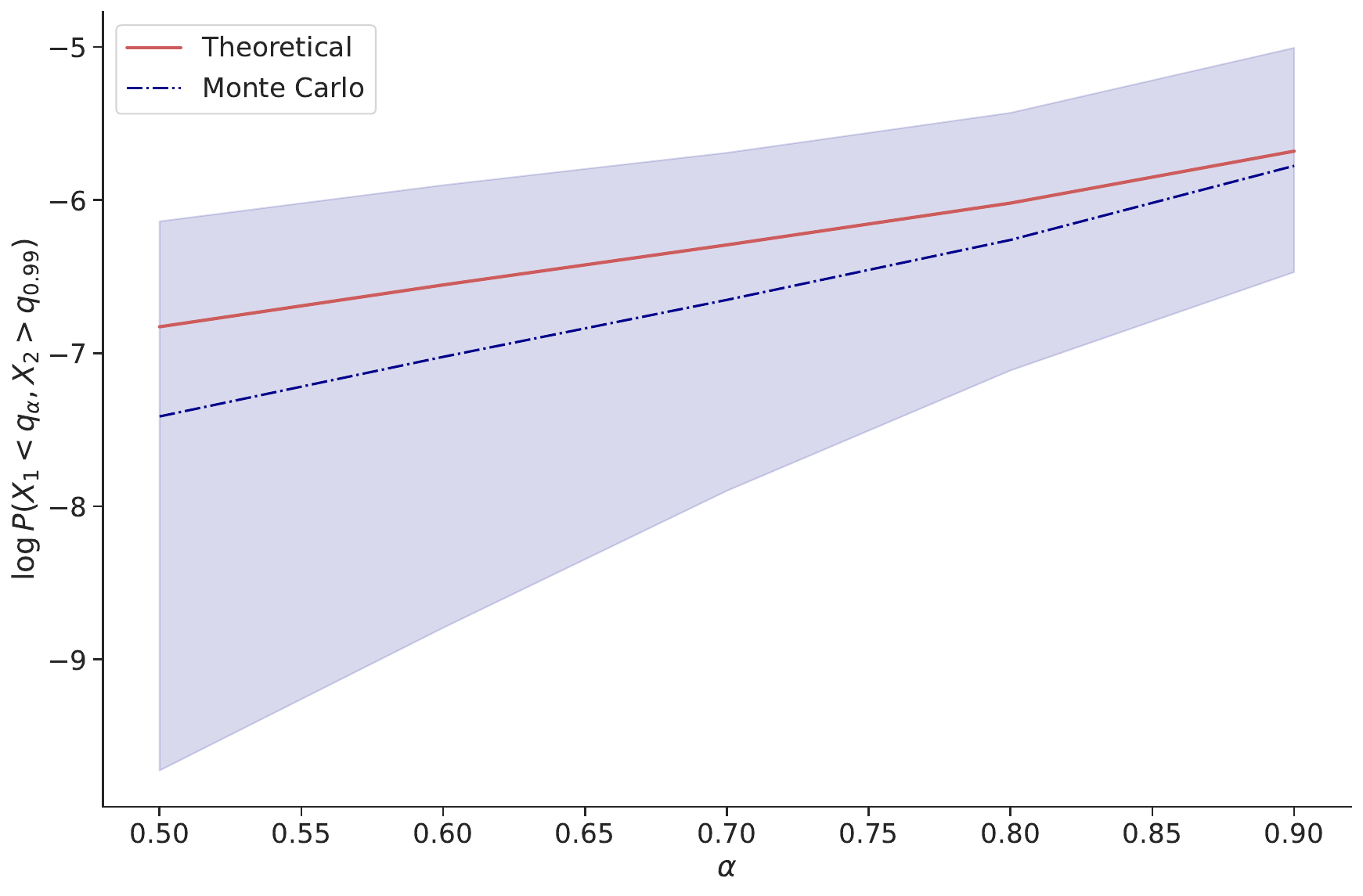}
\end{center}
\caption{\footnotesize{Theoretical and estimated density contour plots of the threshold exceedance data (top panel) and plot of $\log \mathbb{P}(Y_1<q_{1, \alpha}, Y_2>q_{2, 0.99})$ (bottom). The entire data comes from the distribution in~\eqref{eq:sim2dist}.
% density $F_{Y_1,Y_2}(y_1,y_2)= C_{\text{Gumbel}}\{\Phi[s_1^{-1}(y_1-\mu_1)], \Phi[s_2^{-1}(y_2-\mu_2)];\theta\}$, where $C_{\text{Gumbel}(u,v;\theta)}$ is the Gumbel copula, $\Phi$ is the cdf of the standard Gaussian and $\theta=1.3, (\mu_1,s_1)=(1,3), (\mu_2,s_2)=(2,5).$
% The theoretical threshold exceedance density $f_{X_1,X_2}(x_1,x_2)$ is obtained as $f_{X_1,X_2}(x_1,x_2) = f_{Y_1,Y_2}(x_1+q_{1,0.95},x_2 + q_{2,0.95})*\mathbbm{1}\{\exists \; i=1,2, \Phi[s_i^{-1}(x_i+q_{i,0.95}-\mu_i)]>0.95\}/(1-C_{\text{Gumbel}}(0.95,0.95;\theta))$.
The theoretical threshold density is obtained following~\eqref{eq: sim2dens}, while the contour of the GPDFlow density is the average density over 100 simulations. 
In the bottom plot, the theoretical line is derived from \eqref{eq: sim2_theoretical_partial_exceedance_prob}, while the dark blue line is the mean of empirical estimation of \eqref{eq: simulation_2} from 100 thousand GPDFlow samples over 100 simulations, and the shaded area covers the 2.5\% and 97.5\% of the empirical estimations.
}}
\label{fig: simulation_2}
\end{figure}
We further scrutinize the performance of GPDFlow by checking the estimation of joint probabilities associated with partial extremes.
Specifically, we analize the estimation of $\mathbb{P}(Y_1<q_{1,\alpha}, Y_2>q_{2,0.99})$, where $Y_1$ and $Y_2$ are the random variables of the simulated data, $q_{1,\alpha}$ is the $\alpha$-quantile of margin 1 and $q_{2,0.99}$ is the 0.99 quantile of margin 2.
The theoretical value of such probability is straightforward to calculate by the cdf of Gumbel Copula. 
Indeed, we have 
\begin{equation}
 \begin{aligned}
     \mathbb{P}(Y_1 < q_{1,\alpha}, Y_2>q_{2,0.99}) &= \mathbb{P}(Y_1< q_{1,\alpha})-\mathbb{P}(Y_1< q_{1,\alpha}, Y_2< q_{2,0.99})\\
&= \alpha - C_{\text{Gumbel}}(\alpha,0.99;\theta).
 \end{aligned}
\label{eq: sim2_theoretical_partial_exceedance_prob}
\end{equation}

To estimate $\mathbb{P}(Y_1 < q_{\alpha}, Y_2>q_{0.99})$ by GPDFlow, we can write 
\begin{equation}
    \begin{aligned}
&\mathbb{P}(Y_1 < q_{1,\alpha}, Y_2>q_{2,0.99}) \\=&\mathbb{P}\left(Y_1 < q_{1,\alpha}, Y_2>q_{2,0.99}, \bigcup_{j=1}^2\{ Y_j > q_{j,0.95}\}\right) \\
=&\mathbb{P}\left(Y_1 - \tau_1 < q_{1,\alpha} -\tau_1, Y_2 - \tau_2>q_{2,0.99}-\tau_2 \bigg\rvert \bigcup_{j=1}^2\{ Y_j -\tau_j > 0\}\right)\mathbb{P}\left( \bigcup_{j=1}^2\{ Y_j > \tau_j\}\right).
\end{aligned}
\label{eq: simulation_2}
\end{equation}
The first term in the last equation of \eqref{eq: simulation_2} is a probability conditioned on at least one extreme component, making it well-suited for estimation via GPDFlow. 
The second term is a threshold exceedance probability of a moderately high threshold, which can be well estimated empirically.
The bottom plot in Figure \ref{fig: simulation_2} compared the theoretical probability $\mathbb{P}(Y_1<q_{1, \alpha}, Y_2>q_{2, 0.99})$ with its Monte Carlo estimate from GPDFlow for $\alpha\in (0.5,0.9)$ on a logarithm scale.
The GPDFlow estimates closely align with the theoretical value, though a slight underestimation can be observed across all $\alpha$.
As $\alpha$ increases, both bias and variance decrease because the data become more extreme and thus more closely match the theoretical limiting distribution, allowing GPDFlow to characterize them more accurately..

\section{Application}
\label{sec: application}
We apply our GPDFlow to multivariate risk analysis of five major US banks.
The key risk measure in this study is Conditional Value-at-Risk (CoVaR), a pairwise Value-at-Risk that accounts for the dependence between two financial entities, such as institutions or portfolios.
While the definition of CoVaR may vary, it generally falls within a framework that quantifies the risk of one entity given that one or more others are in distress.
For instance, \cite{mainik2014dependence} defines the CoVaR of two random variable $X$ and $Y$ as 
\begin{align}
    \text{CoVaR}_{\alpha,\beta}(Y|X) = \text{VaR}_{\alpha}\{Y|X\geq \text{VaR}_{\beta}(X)\},
    \label{eq: CoVaR}
\end{align}
where $\text{VaR}_{\eta}(Z) = \inf\{ z \in \mathbb{R}: F_Z(z) \geq \eta \}$, $\eta\in (0,1)$, and $F_Z$ is the cdf of random variable $Z$.
This definition extends beyond individual risks by providing a method to measure systemic risk.
Estimating CoVaR requires the entire marginal distribution conditioned on the extreme behavior of the other component, a task that can be naturally handled by GPDFlow. 

\begin{figure}[H]
\begin{center}
\includegraphics[width=3.5in]{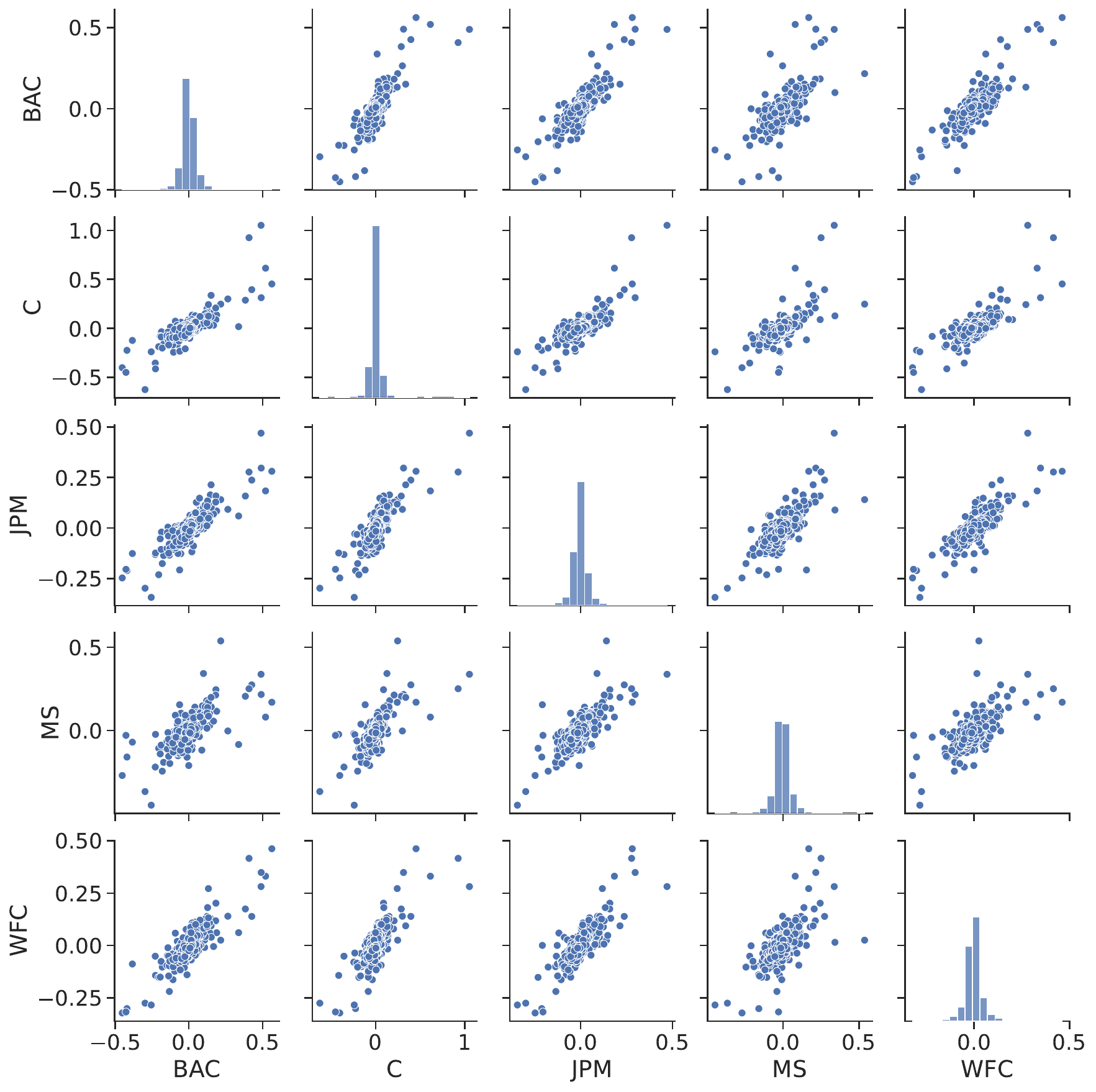}
\includegraphics[width=3in]{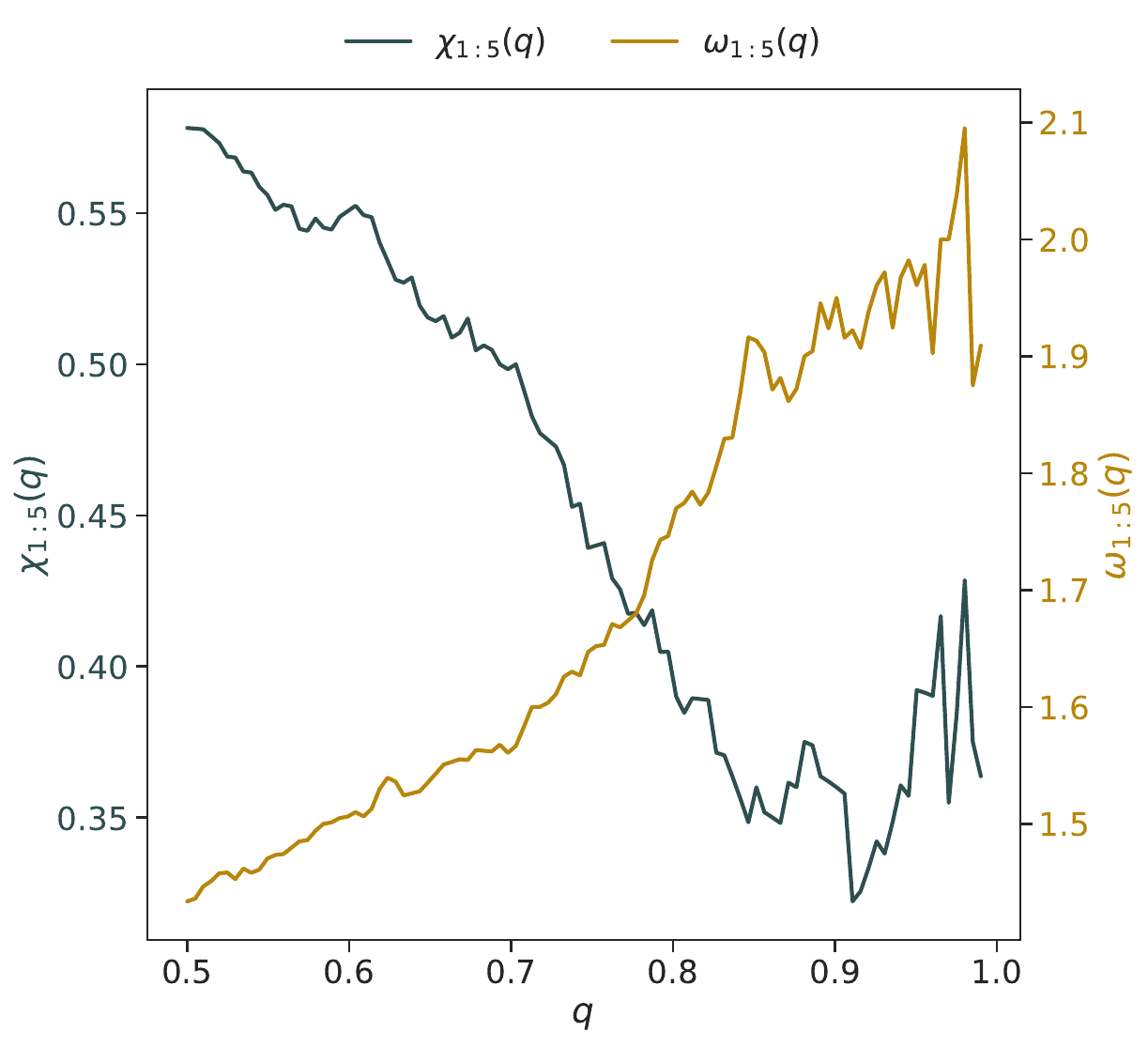}
\end{center}
\caption{\footnotesize{Scatter plots of the negative log-return (left) of the five US banks and empirical $\chi_{1:d}(q)$ and $\omega_{1:d}(q)$ of the negative log-return calculated by Algorithm \ref{alg: thres_selection}. }}
\label{fig: scatter_thres_selection}
\end{figure}
We demonstrate the application of GPDFlow in risk management by examining the CoVaR of the following five large US banks: JPMorgan Chase (JPM), Bank of America (BAC), Citigroup (C), Wells Fargo (WFC) and Morgan Stanley (MS).
Using 5-day closing prices from these banks between January 1, 2005, and February 1, 2025, sourced from Yahoo Finance, we examine the negative log returns, which total 1,010 observations. Since our focus is on general risk rather than forecasting, we do not remove heteroscedasticity from the negative log returns, which is the same treatment as in the analysis of the negative return of bank stocks in \cite{kiriliouk2019peaks}. 
% We use the 5-day close prices of these banks between 2005-01-01 and 2025-02-01, downloaded from Yahoo Finance, to analyze the negative log returns, which have 1010 observations in total.
% Heteroscedasticity in the negative log return is not removed since we are focused on general risk rather than forecasting. 
Figure \ref{fig: scatter_thres_selection} shows scatter plots of the negative log return as well as plots of the two tail dependence measures introduced in Proposition \ref{prop: chi_omega} for the five banks. 
Clear asymptotic dependence is observed in the scatter plot, which is confirmed by the positive values of $\widehat{\chi}_{1:5}(q)$ on the right.
The empirical $\omega_{1:5}(q)$ seems to enter a plateau at around $q=0.9$, while this happens at about $q=0.95$ for $\chi_{1:5}(q)$.
This suggests that the tail data can be well approximated by GPDFlow when at least one component exceeds its 0.95-quantile. 
Consequently, we select the marginal 0.95-quantile as the threshold and model the data above it (100 observations total) using GPDFlow.

For comparison purposes, we also implement a parametric mGPD from \cite{kiriliouk2019peaks} on our data and compare its performance with that of 
GPDFlow.
This parametric mGPD is chosen from three mGPD models, where $f_{\boldsymbol{T}}$ is either independent reverse-exponential, independent Gumbel, or multivariate Gaussian, following the selection procedure in the original paper. The resulting model is given by $f_{\boldsymbol{T}}(t_1,\cdots,t_5)=\prod_{j=1}^5 \alpha_j\exp(-\alpha_j t_j)\exp[-\exp(-\alpha_j t_j)], \; \alpha_j>0,$ with free parameters $\boldsymbol{\sigma}$ and $\boldsymbol{\gamma}$. Parameters in this model are estimated via a censored-likelihood method, in which components below the threshold are treated as censored.
% This parametric mGPD is selected from three mGPD models with $f_{\boldsymbol{T}}$ being independent reverse-exponential, independent Gumbel, and multivariate Gaussian, using the same selecting procedure as in that paper. 
% At the end, we obtain the model that has $f_{\boldsymbol{T}}(t_1,\cdots,t_5)=\prod_{j=1}^5 \alpha_j\exp(-\alpha_j t_j)\exp[-\exp(-\alpha_j t_j)],\\ \quad \alpha_j>0$, free $\boldsymbol{\sigma}$ and $\boldsymbol{\gamma}$. This mGPD is estimated by censored likelihood censored at the threshold.

Figure \ref{fig: model_comparison} compares the performance of GPDFlow and the parametric mGPD in estimating both the marginal densities and tail dependence of the five banks’ threshold exceedance data.
For the marginal density of JPMorgan Chase and Morgan Stanley (with others provided in Section~\ref{sec: appendix_fig5} of the Appendix), GPDFlow provides a strong approximation across the lower tail, upper tail, and high-density regions, without overfitting to the wiggly empirical right tails.
In contrast, although the parametric mGPD does a relatively good job estimating the right tails, it significantly deviates from the empirical density in the lower-tail and high-density regions of the two banks.

In the tail dependence plots, both the estimated $\chi_{1:5}^m(q)$, $\omega_{1:5}^m(q)$, $m \in \{\text{GPDFlow, mGPD}\}$ from GPDFlow and the parametric mGPD show constant behavior for $q<0.95$, consistent with Proposition \ref{prop: chi_omega}. 
This constant trend theoretically extends to the region $q>0.95$ but is subject to more volatility due to the limited number of predicted observations in the simulated data.
Comparing the empirical $\chi_{1:5}$ and $\omega_{1:5}$ with the estimated $\chi_{1:5}^m$ and $\omega^m_{1:5}$ (approximately by $\chi_{1:5}^m(q \mid 0.5<q<0.95)$ and $\omega_{1:5}^m(q \mid 0.5<q<0.95)$, respectively), it is clear that the parametric mGPD underestimates $\chi_{1:5}$ and overestimates $\omega_{1:5}$. Its estimates lie near the boundary of the 95\% bootstrap confidence interval of the empirical values. 
In contrast, GPDFlow estimations exhibit a much smaller bias, with both $\chi^{\text{GPDFlow}}_{1:5}$ and $\omega_{1:5}^{\text{GPDFlow}}$ falling well within the central region of the corresponding empirical confidence interval.
\begin{figure}[H]
\begin{center}
\includegraphics[width=2.5in]{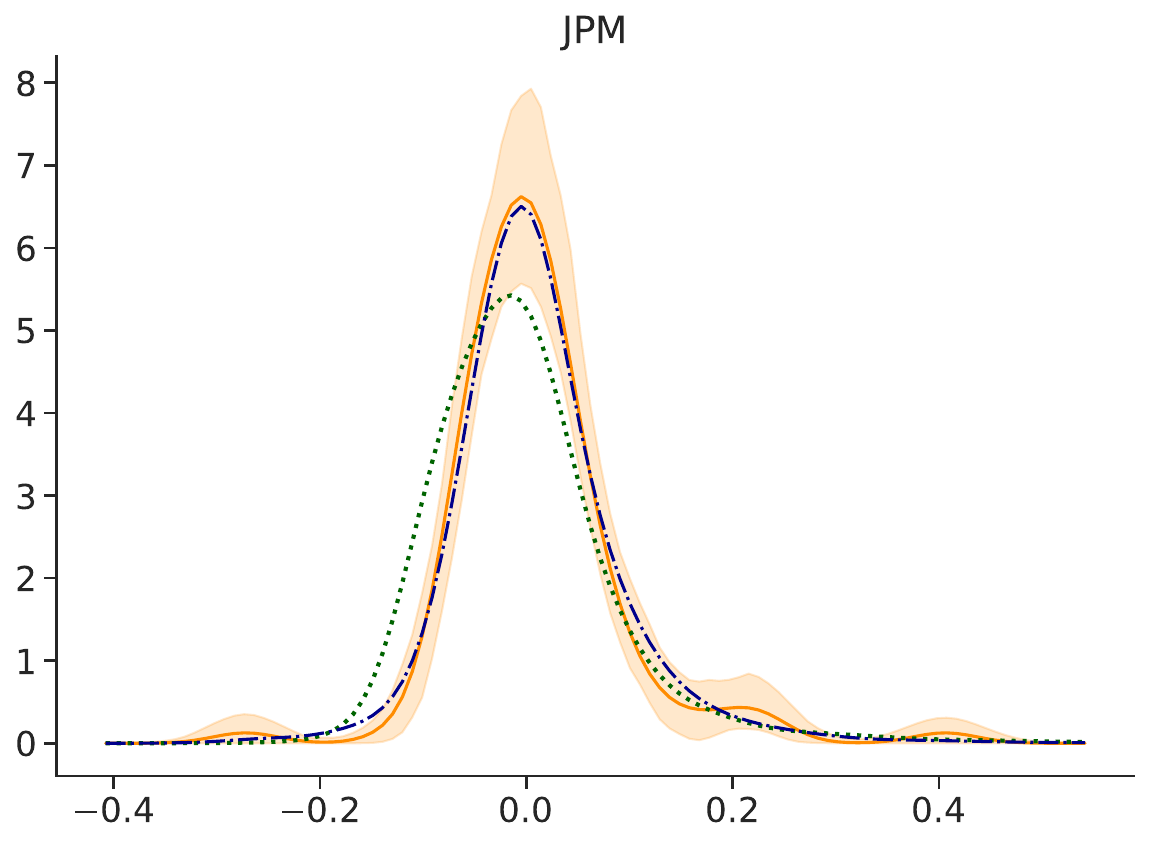}
\includegraphics[width=2.5in]{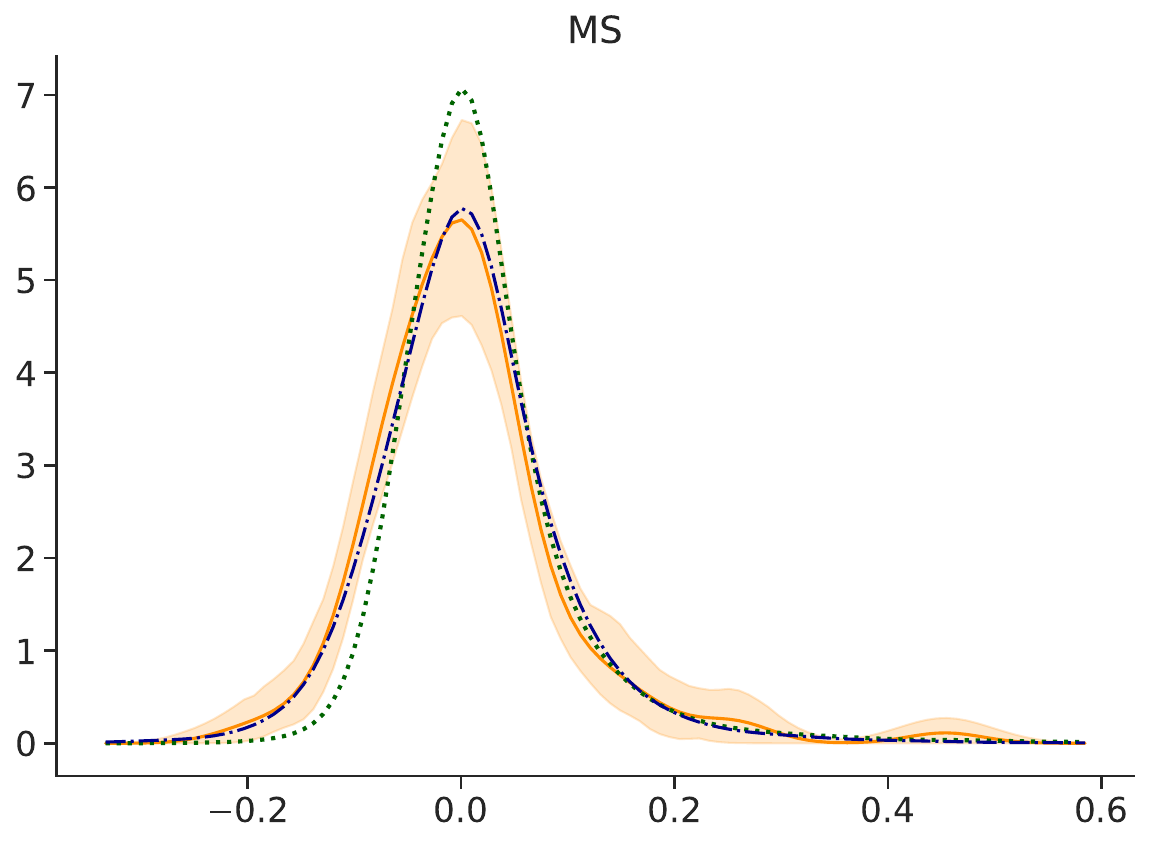}
\includegraphics[width=2.5in]{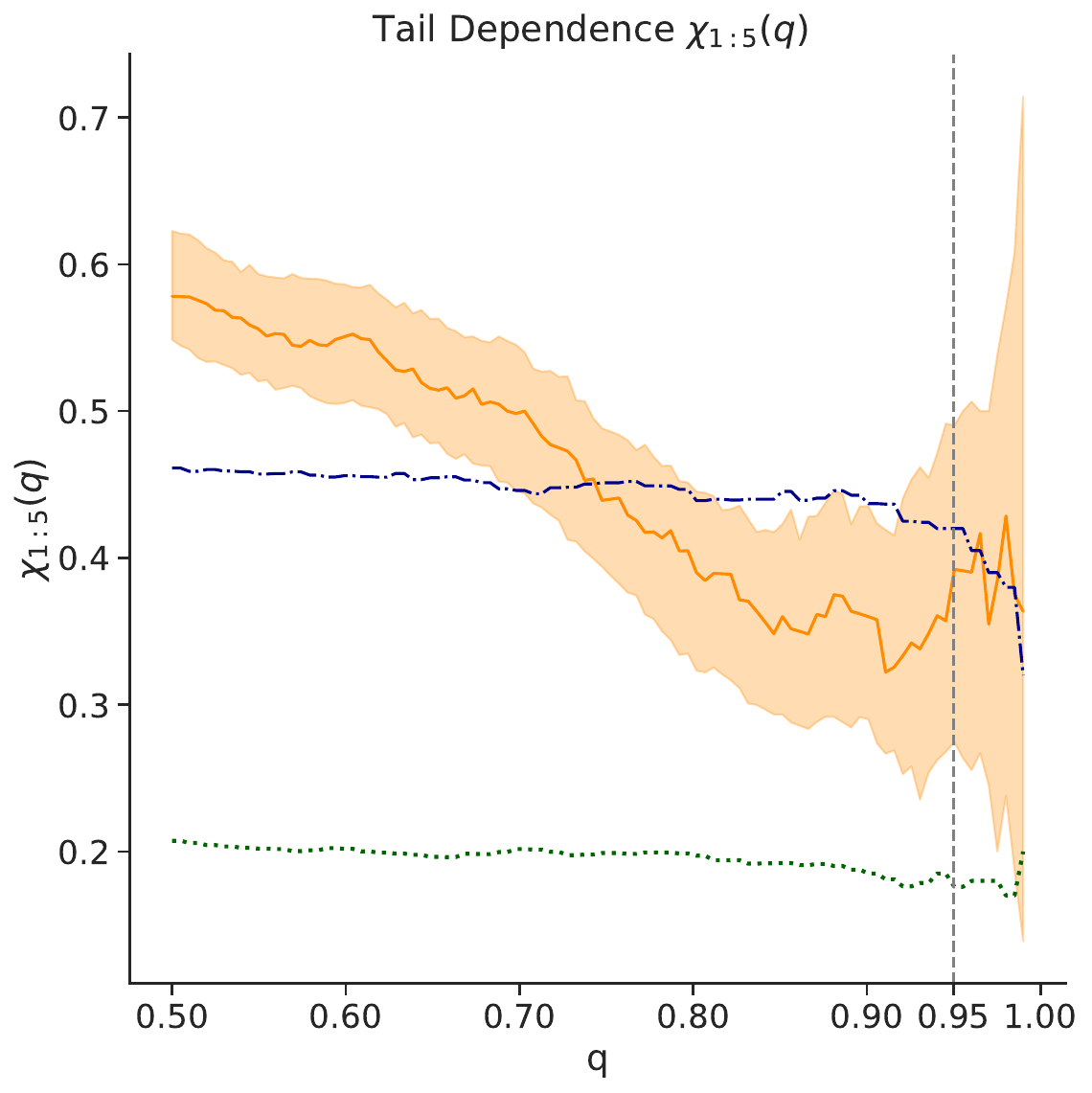}
\includegraphics[width=2.5in]{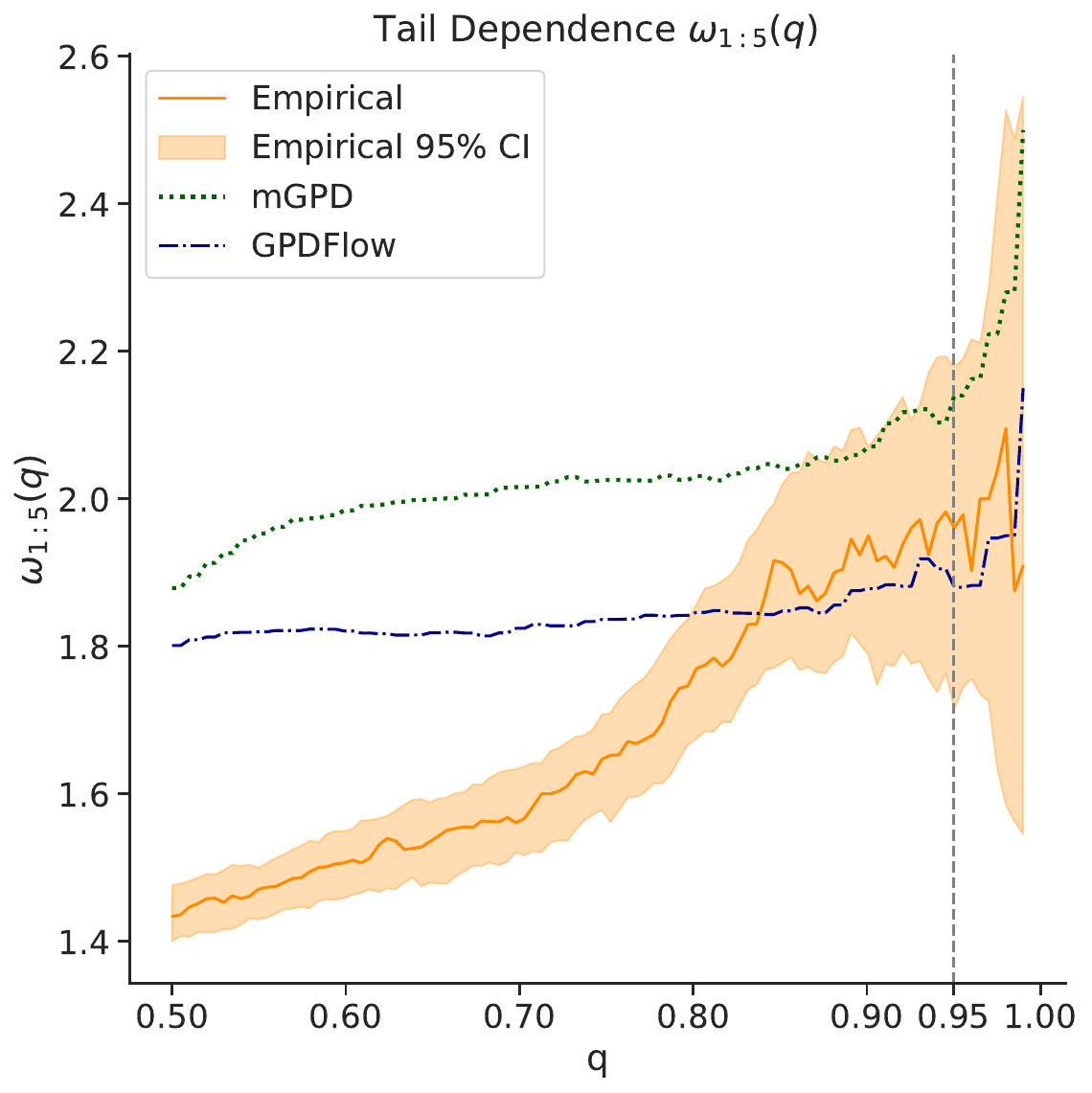}
\end{center}
\caption{\footnotesize{Plots of marginal densities of the threshold exceedance data (top panel) and tail dependence measures (bottom panel). The 95\% CI for the empirical density and dependence measures are generated via Bootstrap. The estimates from the mGPD and GPDFlow are averages derived from 100 Monte Carlo sample sets, each of the same size as the threshold exceedance data and simulated using mGPD or GPDFlow.
% derived from the mGPD/GPDFlow simulated data that are of the same size as the threshold exceedance data. 
}}
\label{fig: model_comparison}
\end{figure}

Next, we use the GPDFlow predictions to estimate the CoVaR between the five banks.
% Since CoVaR is a pairwise risk measure and is conditioned on one entity being in distress, it is well-suited to the multivariate threshold exceedance framework.
% In this case, 
We illustrate the CoVaR that conditions on the largest bank, JPM, under a stress scenario (specifically, $\beta=0.95$).
% \dcc{is ``i.e.'' the best? Shouldn't we say. ``Specifically, we set''}.\ch{CH: Agreed}
Let the negative log-returns of the five banks be denoted by $Y_{\text{JPM}}$, $Y_{\text{BAC}}$, $Y_{\text{C}}$, $Y_{\text{WFC}}$, $Y_{\text{MS}}$ and their corresponding 0.95-quantile threshold by $\tau_{\text{JPM}}$, $\tau_{\text{BAC}}$, $\tau_{\text{C}}$, $\tau_{\text{WFC}}$, $\tau_{\text{MS}}$.
By definition, for any two banks $i,j \in \{\text{BAC, C, JPM, MS, WFC}\},\; i\neq j$, \eqref{eq: CoVaR} can be written as 
\begin{equation}
\begin{aligned}
\alpha  &= \frac{\mathbb{P}\{Y_j < \text{CoVaR}_{\alpha,\beta}(Y_j|Y_i),Y_i> \text{VaR}_{\beta}(Y_i) \}}{\mathbb{P}\{Y_i>\text{VaR}_{\beta}(Y_i) \}} \\
  &= \frac{\mathbb{P}\{Y_j < \text{CoVaR}_{\alpha,\beta}(Y_j|Y_i),Y_i> \text{VaR}_{\beta}(Y_i) ,\bigcup_{i=1}^5\{Y_i>\text{VaR}_\beta(Y_i)\}\}}{\mathbb{P}\{Y_i>\text{VaR}_{\beta}(Y_i), \bigcup_{i=1}^5\{Y_i>\text{VaR}_\beta(Y_i)\} \}} \\
  &= \frac{\mathbb{P}\{Y_j < \text{CoVaR}_{\alpha,\beta}(Y_j|Y_i),Y_i> \text{VaR}_{\beta}(Y_i) |\bigcup_{i=1}^5\{Y_i>\text{VaR}_\beta(Y_i)\}\}}{\mathbb{P}\{Y_i>\text{VaR}_{\beta}(Y_i)| \bigcup_{i=1}^5\{Y_i>\text{VaR}_\beta(Y_i)\} \}} 
\end{aligned}
\label{eq: CoVaR_quantile}
\end{equation}
When $\beta \geq 0.95$,   we have $\text{VaR}_\beta(Y_i) \geq \tau_i$. Consequently, the set $\bigcup_{i=1}^5\{Y_i>\text{VaR}_\beta(Y_i)\}$ is a subregion of the threshold exceedance data $\bigcup_{i=1}^5\{Y_i>\tau_i\}$, which Algorithm \ref{alg: thres_selection} indicates can be well approximated by GPDFlow. 
By setting a threshold $\boldsymbol{\tau}^*$, where $\boldsymbol{\tau}^*_i =\text{VaR}_\beta(Y_i)$, $i\in\{\text{BAC, C, JPM, MS, WFC}\}$, both the numerator and denominator in the last equation of \eqref{eq: CoVaR_quantile} can be approximated by a GPDFlow model under this new threshold.
Therefore, estimating $\text{CoVaR}_{\alpha,\beta}(Y_j|Y_i)$ on the raw negative log-return scale is equivalent to finding the $\text{CoVaR}_{\alpha,\beta}(Y_j-\tau^*_j|Y_i-\tau_i^*)$ on the threshold exceedance scale and then translating it back by $\tau^*_j$.
Figure \ref{fig: CoVaR} displays the CoVaR of BAC, C, MS and WFC conditioned on JMP being in distress (i.e., $Y_{\text{JPM}}> \text{VaR}_{0.95}(Y_{\text{JPM}})$).
The GPDFlow estimates accurately fit the empirical CoVaR across all ranges of $\alpha$, with larger uncertainty for higher $\alpha$. Nevertheless, nearly all empirical CoVaR values fall within the 95\% Monte Carlo confidence interval.
% As can be seen, the GPDFlow fits the empirical CoVaR well for all ranges of $\alpha$.
% The uncertainty is larger for larger $\alpha$, but almost all the empirical CoVaR values fall in the 95\% Monte Carlo confidence interval.
Citigroup exhibits the strongest tail dependence with JPMorgan Chase, with $\text{CoVaR}_{\alpha,0.95}$(C$|$JPM)  being higher for large $\alpha$ compared to the other three banks.
Conversely, Morgan Stanley is less affected by JPMorgan Chase's distress, showing the smallest CoVaR among the four banks.
\begin{figure}[H]
\begin{center}
\includegraphics[width=2.5in]{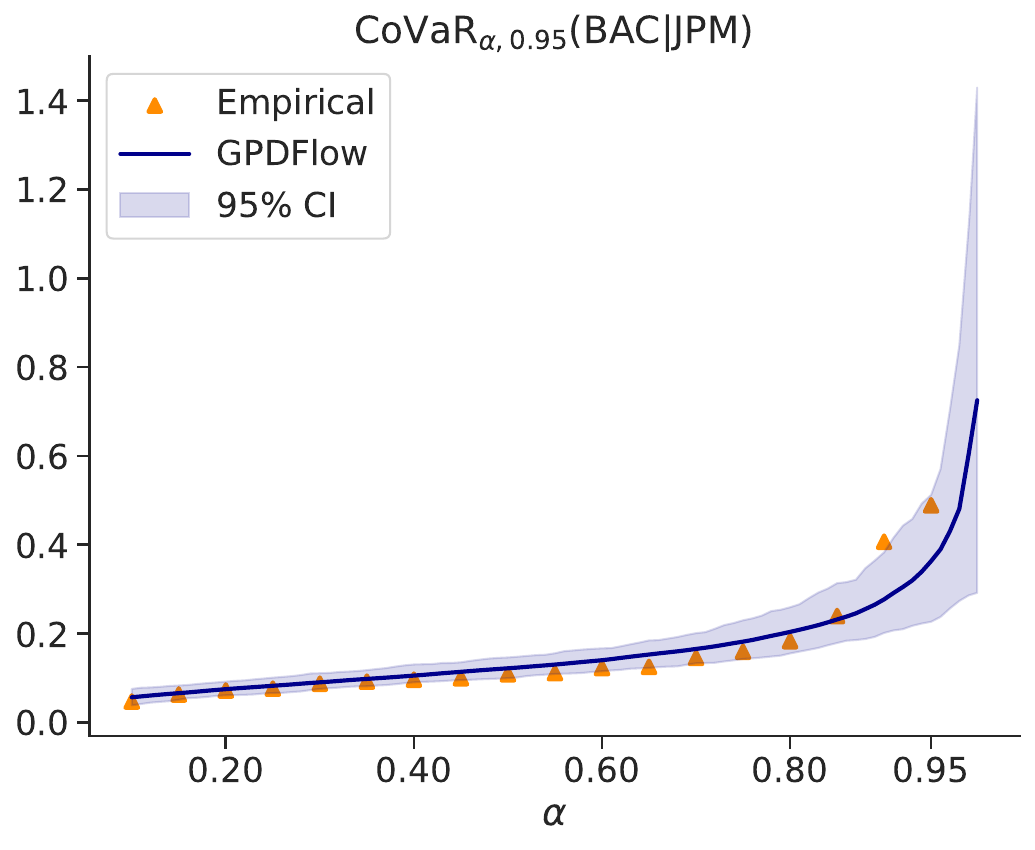}
\includegraphics[width=2.5in]{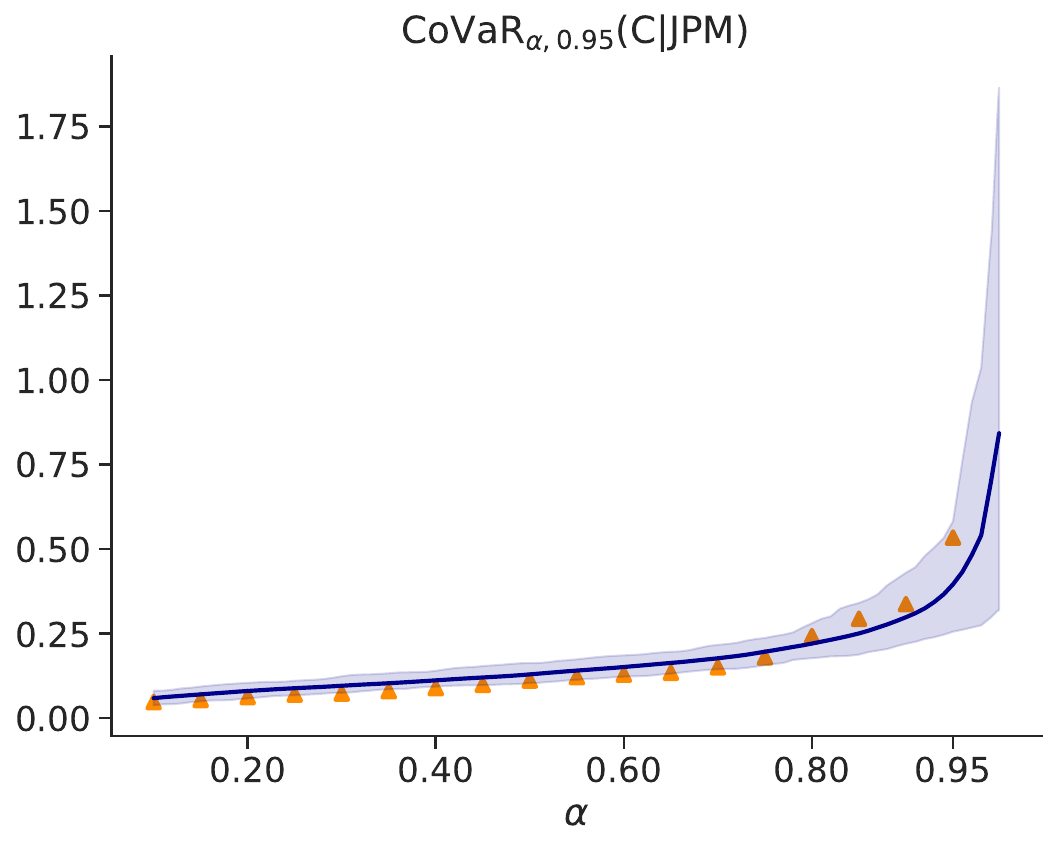}
\includegraphics[width=2.5in]{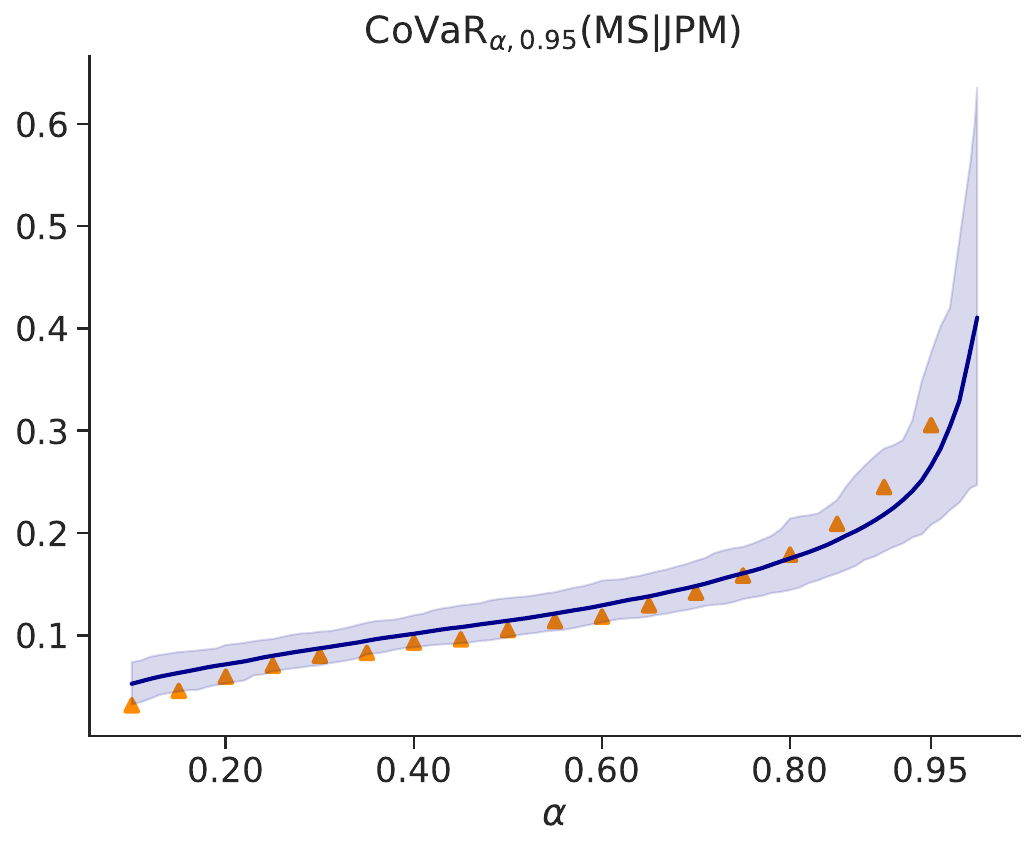}
\includegraphics[width=2.5in]{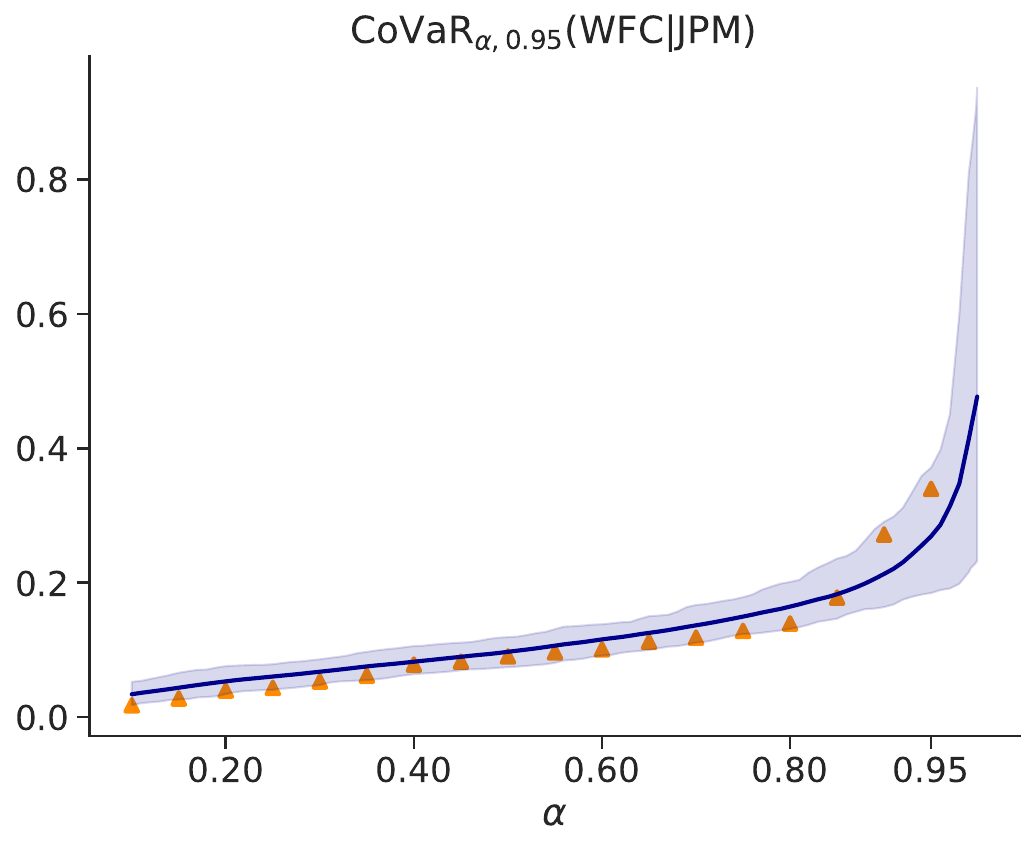}
\end{center}
\caption{\footnotesize{$\text{CoVaR}_{\alpha,\beta}(.|\text{JPM})$ for the four banks, conditioned on JPM being in distress ($\beta=0.95$). The GPDFlow line represents the average CoVaR from 100 Monte Carlo estimations, each derived from the GPDFlow simulated samples that are of the same size as the exceedance data. The 95\% CI is derived from the 2.5\% and 97.5\% quantiles of the Monte Carlo estimations.   }}
\label{fig: CoVaR}
\end{figure}

\section{Discussion}
\label{sec: Discussion}
In this paper, we introduce GPDFlow, a flow-based mGPD model that leverages normalizing flows to capture dependence structures in multivariate threshold exceedances. 
GPDFlow explicitly estimates marginal parameters, allowing direct inference on tail behavior while offering flexible dependence modeling. 
Simulation studies and a real data application show that GPDFlow achieves strong approximation performance for stationary threshold exceedance data.

The GPDFlow framework easily accommodates non-stationary extensions.
Specifically, the marginal parameters $\boldsymbol{\sigma}$ and $\boldsymbol{\gamma}$ can be expressed as functions of covariates through, for instance, MLPs.
% Additionally, the GPDFlow framework readily accommodates non-stationary extensions. 
% Marginal parameters  $\boldsymbol{\sigma}$ and $\boldsymbol{\gamma}$ can be expressed as functions of covariates, such as through MLPs. 
Moreover, the normalizing flows component can seamlessly integrate additional covariates by conditioning on them, as demonstrated by architectures like Masked Autoregressive Flows \citep{papamakarios2017masked}.  

Despite these strengths, GPDFlow has three practical limitations that we address here.
Firstly, while GPDFlow benefits from the theoretically robust tail properties of the mGPD framework, it is constrained by the max-stable assumption in \eqref{eq: MDA}.
Under this assumption, GPDFlow always exhibits asymptotic dependence regardless of the structure of the normalizing flows, as evidenced by the positive upper tail coefficient $\chi_{1:d}$ in Propositions \ref{prop: chi_mGPD} and \ref{prop: chi_omega}. 
Although GPDFlow can behave similarly to asymptotic independence (e.g., in the bivariate case, for sufficiently small values of $\mathbb{E}(\exp\{-|T_1 - T_2|\})$ so that the bivariate tail coefficient approaches zero),  the steep decline in $\chi_{1:d}(q)$ as $q \to 1^-$ for asymptotically independent data \citep{huser2019modeling} typically requires setting a very high threshold (e.g., the 0.99-quantile or higher) in GPDFlow for a good approximation.
Consequently, estimation accuracy can decrease due to increased variance and instability caused by limited exceedance data.
This limitation particularly affects high-dimensional scenarios such as spatial extremes, where asymptotic independence often occurs, especially at distant locations.
A practical solution is to apply GPDFlow exclusively to datasets known to exhibit asymptotic dependence (in the spatial context and depending on the data, this could be data from adjacent or close regions) and check if exceedance data are well-approximated by GPDFlow.
% \red{One way to do this is through the hypothesis test for dependence class proposed in \cite{huser2019modeling}}. \ch{CH: I would not keep the red sentence. The sentences before the red one mention the way to check if data are well approximated by mGPD, but the red sentence talks about the way to differentiate asymptotic independence and independence without providing any information about the suitability of the mGPD. If we know the data can be approximated by mGPD by Algorithm 1, then we also know the data are asymptotic dependent, so doing the hypothesis test is redundant. }
% Alternatively, a 
A quick way to verify the suitability of GPDFlow is to use Algorithm \eqref{alg: thres_selection} to identify the existence of stable trends in $\chi_{1:d}(q)$ and $\omega_{1:d}(q)$ at high thresholds.

Secondly, estimating uncertainty for marginal parameters in GPDFlow is challenging, despite these parameters being maximum likelihood estimates.
Computing statistical uncertainty typically requires the inverse of the Hessian matrix of all parameters, which becomes computationally impractical given the large parameter space (often thousands) involved in the normalizing flows. 
A bootstrap approach may provide uncertainty estimates but at a substantial computational cost.

Lastly, accurate and stable GPDFlow estimates require threshold exceedance data to be on a similar scale. Divergent data scales can hinder gradient descent convergence or produce inaccurate estimates, a common issue in deep learning models \citep{ioffe2015batch}. To mitigate this, data rescaling can be performed prior to model training. 
Specifically, fitting a univariate GPD to each component's exceedance data and transforming the data using the estimated parameters $\widehat{\sigma}_i$ and $\widehat{\gamma}_i$ via \eqref{eq: transformation} helps improve training stability. 
While this method compromises joint marginal and dependence modeling, it preserves the flexible dependence modeling capability of GPDFlow.

\bigskip
% \begin{center}
% {\large\bf Supplementary material}
% \end{center}
% Code is freely available at {\url{XXX}}. 

% \bigskip
% \begin{center}
% {\large\bf {Appendix}}
% \end{center}

\appendix
\section{ Supplementary material}
\renewcommand{\theequation}{A.\arabic{equation}}
\setcounter{equation}{0}

\renewcommand{\thefigure}{A.\arabic{figure}}
\setcounter{figure}{0}

\subsection{Code and Data}
The code and data required to reproduce the results in Sections~\ref{sec: simulation} and~\ref{sec: application} are freely available at 
\url{https://github.com/hcl516926907/GPDFlow.git}. 
\subsection{Proofs}\label{sec:a_proofs}

\begin{proof}[Proof of Proposition \ref{prop: chi_mGPD}]
The first part of Proposition \ref{prop: chi_mGPD} is a special case $d=2$ of Proposition \ref{prop: chi_omega}, so its proof is omitted here.

For the second part, let $X_1 = \exp\{T_1-\max{(\boldsymbol{T})}\} $ and $X_2 = \exp\{T_2-\max{(\boldsymbol{T}})\}$. 
By the exchangeability assumption of $T_1$ and $T_2$, we have $\mathbb{E}(X_1) = \mathbb{E}(X_2) := c$. 
Since $X_1$ and $X_2$ can only be pair $(1,\; \exp\{-|T_1-T_2|\})$ or $(\exp\{-|T_1-T_2|\}, \;1)$, 
\begin{align*}
    \mathbb{E}(X_1 + X_2) = 1 + \mathbb{E}(\exp\{-|T_1-T_2|\}) = 2c
\end{align*}
Hence $c = \frac{1 + \mathbb{E}(\exp\{-|T_1-T_2|\})}{2} $.
Now 
\begin{align*}
    \chi_{1,2} &=  \mathbb{E}\left(\min \left\{\frac{\exp\{T_1-\max(\boldsymbol{T})\}}{\mathbb{E}(\exp\{T_1-\max(\boldsymbol{T})\})},\frac{\exp\{T_2-\max(\boldsymbol{T})\}}{\mathbb{E}(\exp\{T_2-\max(\boldsymbol{T})\})}\right\}\right)\\
    &= \mathbb{E}\left(\min \left\{\frac{X_1}{\mathbb{E}(X_1)},\frac{X_2}{\mathbb{E}(X_2)}\right\}\right)\\
    &=\frac{1}{c}\mathbb{E}\left(\min \left\{1, \exp\{-|T_1-T_2|\}\right\}\right)\\
    &=\frac{\mathbb{E}(\exp\{-|T_1-T_2|\})}{c}\\
    &= \frac{2\mathbb{E}(\exp\{-|T_1-T_2|\})}{1+\mathbb{E}(\exp\{-|T_1-T_2|\})}
\end{align*}
\end{proof}

\begin{proof}[Proof of Proposition \ref{prop: chi_omega}]
The tail copula  \citep{schmidt2006non} $ 
R(\boldsymbol{x}): [0,\infty)^d \rightarrow [0,\infty)$ defined as 
\begin{align*}
    R(\boldsymbol{x}) = \lim_{n\rightarrow\infty} n\mathbb{P}\{F_1(X_1)>1-x_1/n,\cdots,F_d(X_d)>1-x_d/n\}
\end{align*}
is associated to the stdf $\ell(\boldsymbol{x})$ by inclusion-exclusion formula and can be expressed as 
\begin{align*}
    R(\boldsymbol{x}) = \mathbb{E}(\min(\boldsymbol{x}\boldsymbol{V}))
\end{align*}
by minimum–maximum identity and the same random vector $\boldsymbol{V}$ in \eqref{eq: D_norm} \citep{rootzen2018multivariate}.
By Proposition 7.1 in the same paper, when $q> q^*$, $F_j(X_j)>0, j=1,\cdots,d$ 
\begin{align*}
    \chi_{1:d}(q) &= \frac{\mathbb{P}\{\bigcap_{j=1}^d \{X_j>F_j^{-1}(q)\}\}}{1-q}\\
                  &=\frac{R(q,\cdots,q)}{q} \\
                  &=\mathbb{E}(\min (\boldsymbol{V}))\\
                  &= \mathbb{E}\left(\min \left\{\frac{\exp\{T_1-\max(\boldsymbol{T})\}}{\mathbb{E}(\exp\{T_1-\max(\boldsymbol{T})\})},\cdots,\frac{\exp\{T_d-\max(\boldsymbol{T})\}}{\mathbb{E}(\exp\{T_d-\max(\boldsymbol{T})\})}\right\}\right).
\end{align*}
Similarly,
\begin{align*}
    \omega_{1:d}(q) &= \frac{\mathbb{P}\{\bigcup_{j=1}^d \{X_j>F_j^{-1}(q)\}\}}{1-q}\\
                  &=\frac{\ell(q,\cdots,q)}{q} \\
                  &=\mathbb{E}(\max(\boldsymbol{V}))\\
                  &= \mathbb{E}\left(\max\left\{\frac{\exp\{T_1-\max(\boldsymbol{T})\}}{\mathbb{E}(\exp\{T_1-\max(\boldsymbol{T})\})},\cdots,\frac{\exp\{T_d-\max(\boldsymbol{T})\}}{\mathbb{E}(\exp\{T_d-\max(\boldsymbol{T})\})}\right\}\right)
\end{align*}
\end{proof}

\subsection{Supporting plots}
Below are the plots that are not shown in the main body for space consideration.

\subsubsection{Figure \ref{fig: model_comparison} continued}
\label{sec: appendix_fig5}
\begin{figure}[H]
\begin{center}
\includegraphics[width=2.5in]{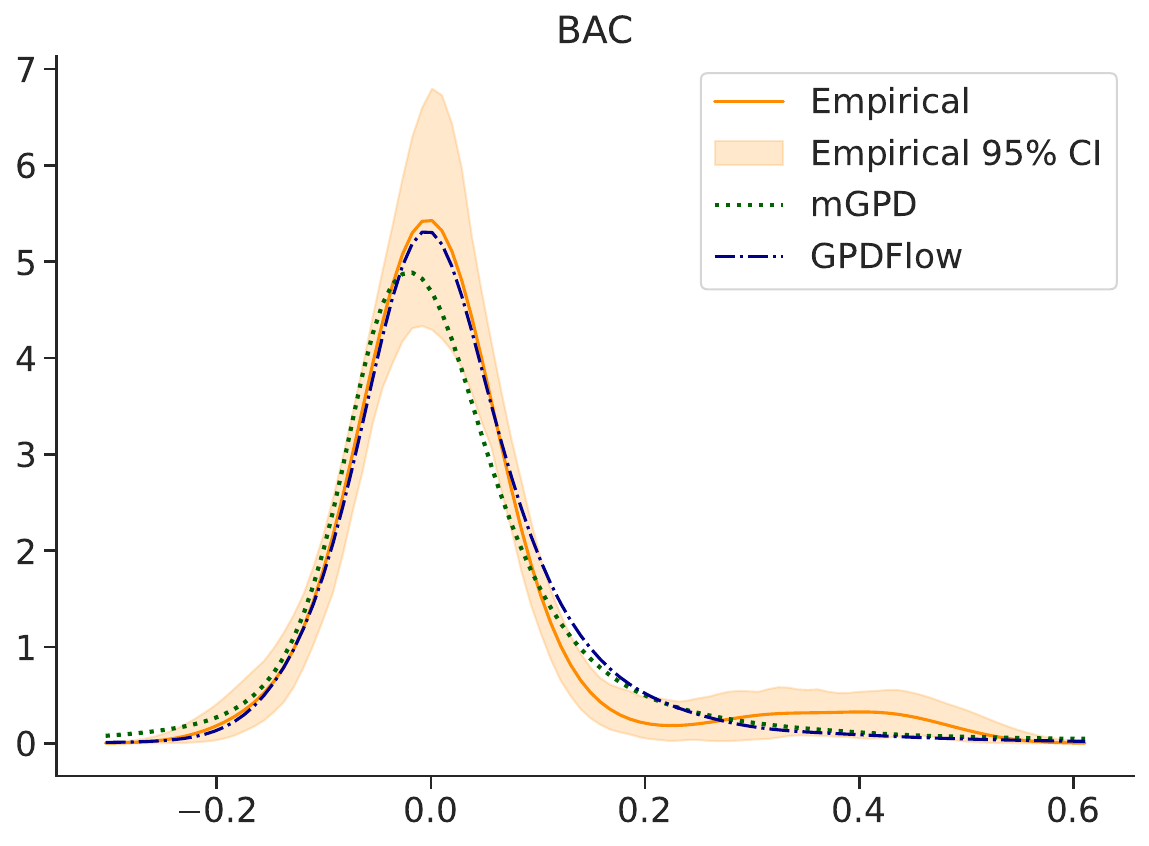}
\includegraphics[width=2.5in]{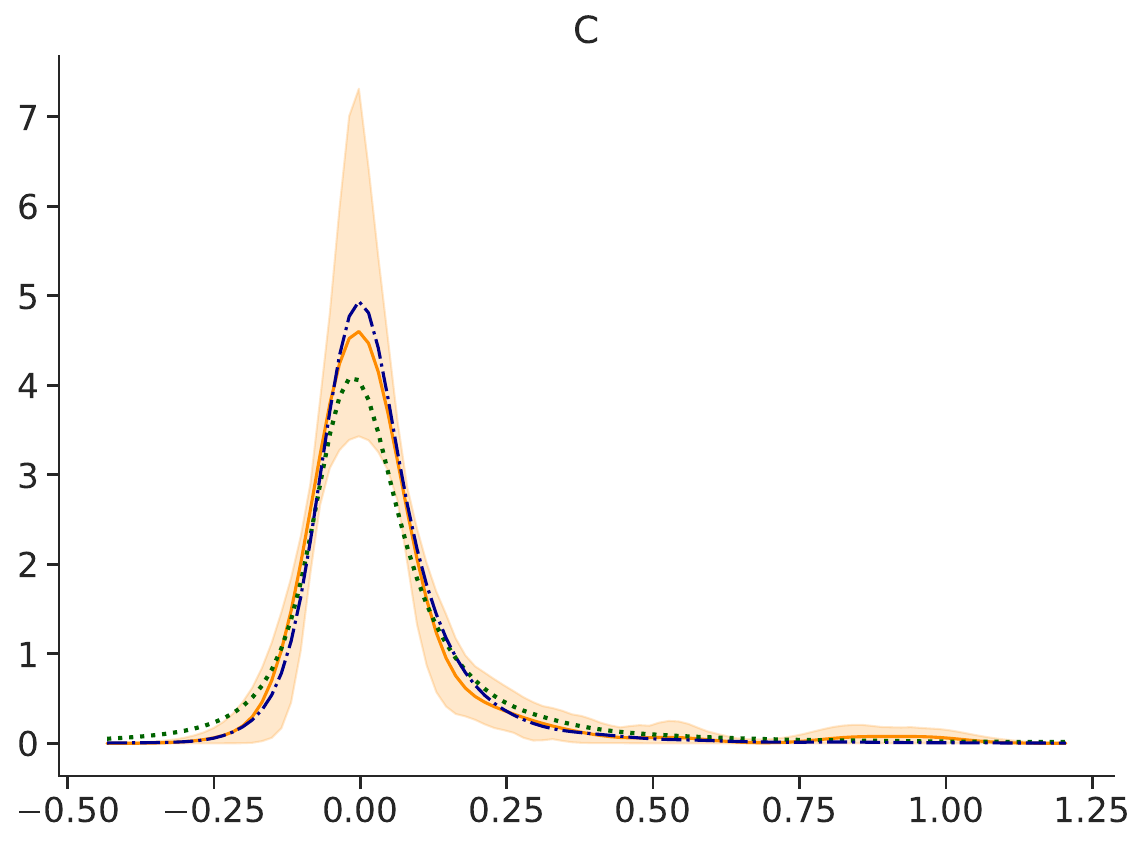}
\includegraphics[width=2.5in]{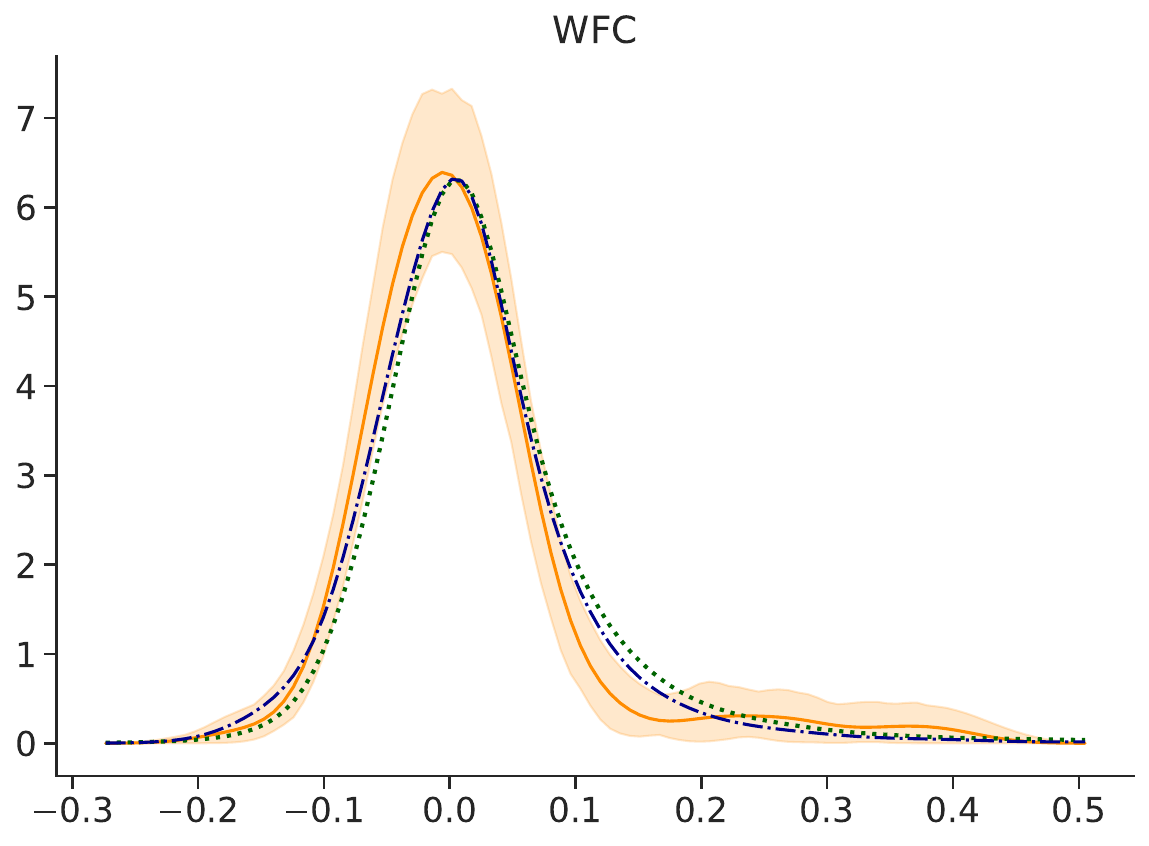}
\end{center}
\caption{\footnotesize{Marginal density estimation comparisons for bank BAC, C and WFC. }}
\end{figure}
\bibliographystyle{apalike}
\bibliography{My_Library}
\end{document}